\newcommand{\citet}[1]{\citeauthor*{#1}~\cite{#1}}
\newcommand{\citep}[1]{\parencite[#1]}
\newtheorem{lemma}{Lemma}
\crefname{algocf}{alg.}{algs.}
\Crefname{algocf}{Algorithm}{Algorithms}
\newcommand{\set}[1]{\left\{ #1 \right\}}
\newcommand{\abs}[1]{\left| #1 \right|}
\newcommand{\Prs}{\mathrm{Pr}_s}
\newcommand{\Pro}{\mathrm{Pr}_o}
\newcommand{\Prg}{\mathrm{Pr}_g}
\title{Surprisingly Popular Voting for Concentric Rank-Order Models}
\begin{document}

\author{Hadi Hosseini\\  Penn State University, USA\\ \texttt{hadi@psu.edu}
\and Debmalya Mandal\\University of Warwick, UK\\ \texttt{Debmalya.Mandal@warwick.ac.uk}
\and Amrit Puhan\footnote{Authors are ordered alphabetically.}\\ Penn State University, USA\\ \texttt{avp6267@psu.edu}
}

\maketitle
\begin{abstract}
An important problem on social information sites is the recovery of ground truth from individual reports when the experts are in the minority. The wisdom of the crowd, i.e. the collective opinion of a group of individuals fails in such a scenario. However, the surprisingly popular (SP) algorithm~\cite{prelec2017solution} can recover the ground truth even when the experts are in the minority, by asking the individuals to report additional prediction reports--their beliefs about the reports of others. Several recent works have extended the surprisingly popular algorithm to an equivalent voting rule (SP-voting) to recover the ground truth ranking over a set of $m$ alternatives. However, we are yet to fully understand when SP-voting can recover the ground truth ranking, and if so, how many samples (votes and predictions) it needs. We answer this question by proposing two rank-order models and analyzing the sample complexity of SP-voting under these models. In particular, we propose concentric mixtures of Mallows and Plackett-Luce models with $G (\ge 2)$ groups. Our models generalize previously proposed concentric mixtures of Mallows models with $2$ groups, and we highlight the importance of $G > 2$ groups by identifying three distinct groups (expert, intermediate, and non-expert) from existing datasets. Next, we provide conditions on the parameters of the underlying models so that SP-voting can recover ground-truth rankings with high probability, and also derive sample complexities under the same. We complement the theoretical results by evaluating SP-voting on simulated and real datasets.
\end{abstract}

\maketitle
\section{Introduction}

The recovery of ground truth from individual reports is one of the most vital aspects of social information sharing and online discourse. The \emph{wisdom of the crowds} phenomenon refers to the observation that the collective value of a group of noisy individual opinions can be used to recover the ground truth~\cite{galton1949vox}. Such a collective value cancels out the biases of individual opinions when the number of participants is large and is often deployed to recover the ground truth  on online polling and Q\&A platforms (e.g. Reddit).

However, when the experts are in the minority, approaches that rely on the collective opinion of a group of individuals fail to recover the ground truth.
The Surprisingly Popular (SP) algorithm \cite{prelec2017solution} is a promising technique capable of recovering the ground truth even when experts are in the minority.
In addition to asking individuals' opinion (aka \textit{vote}), it asks them to predict how they believe the majority's answer is (aka \textit{prediction}). The SP algorithm then picks the outcome which is \emph{surprisingly popular} i.e. whose actual frequency in the votes is greater than its average predicted frequency. It provably recovers the ground truth as the number of individuals grows, even with a minority of experts.

This approach has been extended to voting rules, called \textit{SP-voting}, in order to recover the ground truth rankings over a set of $m$ alternatives.
The naive application of SP-algorithm to voting requires that individuals submit their prediction as a distribution over $m!$ possible permutation of alternatives, which implies that the amount of information elicited from each voter is exponential in $m$.
Surprisingly, SP-voting has been shown to effectively recover the ground truth in practice even when predictions are limited to a set of size $m$, providing a substantial improvement over classical voting rules by focusing on eliciting the most likely top alternative or ranking \cite{hosseini2021surprisingly}. Furthermore, SP-voting has been extended to partial ranks where the voters provide reports (votes and predictions) over subsets of size $k$ with $k \ll m$ \cite{hosseini2024surprising}.

While SP-voting has been shown to be effective in full or partial rankings, we are yet to fully understand when SP-voting can recover the ground truth ranking, and if so, how many samples (votes and predictions) it needs. To the best of our knowledge, this question is unexplored even for the basic SP algorithm. The main difficulty of analyzing such algorithms is that they are \emph{non-parametric} i.e. they don't make any assumptions about the underlying distribution of votes and predictions, and it's not immediately clear what type of parametric models would be a good fit for real-world datasets and are also amenable to analysis under the surprisingly popular framework. For the setting of partial rankings, \citet{hosseini2024surprising} performed a preliminary analysis of SP-voting under a mixture of Mallows model with two groups. However, we observe that the real datasets need more than two groups and more general rank-order models. Thus, we ask the following questions:

\begin{quote}
What general rank-order models can explain ranking datasets (both votes and predictions) with a ground truth ranking? Furthermore, can we analyze SP-voting under such rank-order models, and determine its sample complexity, and conditions for identifying the ground truth ranking?
\end{quote}

\subsection{Our Contributions}
We propose various rank-order models with a ground truth ranking, and analyse the SP-voting rule under these models. In particular, our contributions are the following.
\begin{itemize}
    \item We propose two rank-order models, the Concentric Mixture of Mallows and the Concentric Mixture of Plackett-Luce, and generalize them to accommodate populations of $G \geq 2$ groups.
    \item We derive the conditions required for the identification of ground truth ranking under the SP-voting and the proposed concentric rank-order models. The derived conditions highlight a tension between the fraction of different groups and the "expertise" (i.e. noise levels) of different groups.
    \item   To evaluate practical viability, we fit these models to real-world datasets for populations with $G = 2$ and $G = 3$ groups. When $G=3$, besides the expert and non-expert groups, we identify an intermediate group of voters of large fraction that explains the observed datasets better than prior approaches with two groups.
    \item Furthermore, we generate synthetic data based on these models and provide empirical results on the sample complexity of SP-Voting, comparing it against the Copeland rule. Finally, experiments on real-world datasets show that SP-voting performs significantly better than the Copeland voting rule even when the dataset size is small.
\end{itemize}

\subsection{Related Work}

The challenge of ground truth recovery using the wisdom of the crowd has been extensively explored in social choice theory \parencite{galton1949vox, de2014essai, surowiecki2005wisdom}. Several vote aggregation rules  \parencite{de2014essai, borda1781m, copeland1951reasonable, young1977extending} have been proposed based on this concept to aggregate voters' preferences and recover the underlying ground truth. However, this approach falters when the majority of participants are misinformed \parencite{simmons2011intuitive}, biased \parencite{chen2004eliminating}, or when expert opinions are underrepresented within the population \parencite{prelec2017solution}. To address this limitation, \citet{prelec2017solution} introduced the Surprisingly Popular (SP) algorithm, which requires voters to provide two types of information: their individual vote and their prediction of the consensus vote. This framework has since been used to incentivize truthful behaviour in agents \cite{schoenebeck2021wisdom, schoenebeck2023two}, mitigate biases in academic peer review ~\cite{lu2024calibrating}, elicit expert knowledge \cite{kong2018eliciting}, forecast
geopolitical events \cite{debmalya2020effectiveness}, and aggregate information~\cite{chen2023wisdom}. However, \citet{prelec2017solution}'s SP algorithm becomes impractical when the objective is to recover true ordinal ranking, since it necessitates information across all $m!$ possible vote configurations. 
The surprisingly popular algorithm was extended to recover full rankings while reducing its complexity to $\binom{m}{2}$ votes, making it more practical for smaller values of $m$ \parencite{hosseini2021surprisingly}.
Further extending this line of work, SP-Voting has been generalized to handle any number of alternatives, while also introducing mechanisms for partial preference elicitation to improve the efficiency of ground truth recovery \parencite{hosseini2024surprising}. However, it is still unclear under what conditions SP-Voting is effective for a large number of alternatives when eliciting rankings. Specifically, the structure of the voting population and whether their voting behavior can be mathematically modeled need to be studied in detail.

The modeling of ranked data can be approached from two perspectives: modeling the population of voters and modeling the ranking process itself \parencite{marden1996analyzing}. To date, the SP-Voting framework has been examined primarily by classifying voters into two distinct groups. Our work extends this analysis by generalizing it to account for any number of groups, denoted as $G$. In terms of modeling the ranking process, several probabilistic models have been developed to represent voter preference generation. These include Order Statistic models, such as the Thurstonian model \parencite{thurstone2017law}; Pairwise Comparison models, like the Bradley-Terry model \parencite{bradley1952rank}; Multistage models, such as the Plackett-Luce model \parencite{luce1959possible, plackett1954reduction}; and Distance-based models, like the Mallows' model \parencite{mallows1957non}, among others. \citet{marden1996analyzing} provides a more comprehensive review of these models.

The SP-Voting framework was recently studied under the assumption that voters' preferences are drawn from an underlying probability distribution known as the Concentric Mixture of Mallows model, a variant of Mallows' model \parencite{hosseini2024surprising}.
In this work, we extend the SP-Voting framework by investigating two different vote distribution assumptions: the distance-based Mallows' model and the multistage Plackett-Luce model. Specifically, we build on prior work by extending the Mallows' model to account for $G$ groups, allowing for a more general analysis of voter populations. Additionally, we propose a novel Concentric Plackett-Luce Mixture model, a variant of the multistage Plackett-Luce model, which similarly incorporates $G$ groups.

\section{Model}
\label{sec:model}

Here we formally introduce the setting and the necessary notations. We will first introduce surprisingly popular voting considering reports over full rankings, and then cover the setting with partial rankings.
Let $A$ = \{$a_1, a_2, ..., a_m$\} be the set of $m$ possible alternatives. The set $\mathcal L(A)$ represents all possible complete rankings over the alternatives. Let $\sigma \in \mathcal L(A)$ represent a complete ranking of the $m$ possible alternatives. 
 We assume that there is a true ranking by $\sigma^\star \in \mathcal L(A)$; which is drawn from a prior $P(\cdot)$ over $\mathcal{L}(A)$. Voter $i$ observes a ranking $\sigma_i$ that is assumed to be a noisy version of the ground truth ranking $\sigma^\star$. We will write $\Prs(\sigma_i \mid \sigma^\star)$ to denote the probability that the voter $i$ observes her ranking $\sigma_i$ given the ground truth ranking $\sigma^\star$.

  Given voter $i$'s ranking $\sigma_i$ and the prior $P(\cdot)$, voter $i$ can compute the posterior distribution over the ground truth using the Bayes rule.
 \begin{equation}\label{defn:full-posterior}
        \Prg(\sigma^\star \mid \sigma_i) = \frac{\Prs(\sigma_i \mid \sigma^\star) \cdot  P(\sigma^\star)}{\sum_{\sigma' \in \mathcal L(A) }{\Prs(\sigma_i|\sigma') \cdot P(\sigma')}}
    \end{equation}
Using the posterior over the ground truth, voter $i$ can also compute a distribution over the rankings observed by another voter.
   \begin{equation}
         \Pro(\sigma_j \mid \sigma_i) = \sum_{\sigma{'} \in \mathcal L(A)}{\Prs(\sigma_j \mid \sigma{'}) \cdot \Prg(\sigma^{'} \mid \sigma_i)}   
    \end{equation} 
    
 The surprisingly popular algorithm asks voters to report their votes, and posterior over others' votes. For each ranking $\sigma'$, it then computes the frequency $f(\sigma') = \frac{1}{n} \sum_{i} \mathbf{1}[\sigma = \sigma']$, and posterior $$h(\sigma \mid \sigma') = \frac{1}{\abs{\set{i: \sigma_i = \sigma'}} } \sum_{i: \sigma_i = \sigma'} \Pro(\sigma \mid \sigma_i),$$ and finally picks the ranking with highest \emph{prediction normalized votes}.\footnote{This is the direct application of SP algorithm \parencite{prelec2017solution} by considering $m!$ possible ground truths.}
  \begin{equation}
        \widehat{\sigma} \in \textrm{argmax}_\sigma \overline{V}(\sigma) = f(\sigma) \cdot \sum_{\sigma'\in\mathcal{L}(A) } \frac{h(\sigma'\mid \sigma)}{ h(\sigma\mid \sigma')}
    \label{eq:prediction_normalized_vote}
    \end{equation}
\citet{hosseini2021surprisingly} observed that asking for full posterior over $m!$ rankings might be prohibitive and introduced \emph{surprisingly popular voting} (SP-voting) that only asks voters about ranking according to the posterior. 

We will also consider the setting when voters report partial rankings over subsets of size $k \ll m$. Let us fix a subset $T \subseteq A$ of size $k$. Then the probability of a partial ranking $\pi_i$ given the ground truth ranking $\sigma^\star$ is
$$
\Prs(\pi_i \mid \sigma^\star) = \sum_{\sigma: \sigma \rhd \pi_i} \Prs(\sigma \mid \sigma^\star)
$$
Here $\sigma \rhd \pi_i$ means that the ranking $\sigma$ when restricted to the subset $T$ is $\pi_i$. We can also naturally extend definition \ref{defn:full-posterior} to define the posterior distribution given a partial ranking.
\begin{equation}
    \label{defn:partial-posterior}
    \Prg(\sigma^\star \mid \pi_i) = \frac{\Prs(\pi_i \mid \sigma^\star) \cdot  P(\sigma^\star)}{\sum_{\sigma' \in \mathcal L(A) }{\Prs(\pi_i|\sigma') \cdot P(\sigma')}}
\end{equation}
Using the posterior over the ground truth, voter $i$ can also compute the distribution over partial rankings observed by another voter.
\begin{equation}
    \Pro(\pi_j \mid \pi_i) = \sum_{\sigma' \in \mathcal{L}(A)} \Prs(\pi_j \mid \sigma') \cdot \Prg(\sigma' \mid \pi_i)
\end{equation}
Finally, we can compute the \emph{prediction-normalized vote} (as defined in \cref{eq:prediction_normalized_vote} but over partial rankings) and pick the partial ranking $\widehat{\pi}$ over the subset $T$ with the maximum value. 
We are interested in extension of SP-voting to partial rankings as proposed by \citet{hosseini2024surprising}.
Namely, the \emph{partial-SP} algorithm first applies SP-voting to a collection of subsets to recover ground truth partial rankings over these subsets, and then aggregates them using a voting rule \cite{hosseini2024surprising}.


In the next section, we describe in detail the exact distribution that $\Prs$ takes to accurately model the voter behavior and reason about our choices.

\section{Concentric Mixtures Models}

Concentric Mixture Models are a class of probabilistic models used to represent how different groups within a population rank a set of alternatives, all relative to a single underlying ground truth ranking. These models capture variations in group behavior by incorporating parameters that reflect the degree and nature of each group’s deviation from this central ranking. Our main goal in this section is to analyze the performance of SP-voting under different concentric mixture models, by first identifying the conditions required to identify the ground truth, and then providing upper bounds on the sample complexity of SP-voting. 
We begin with the \emph{Concentric Mixture of Mallows Model} in \Cref{subsec:concentric_mallows}
, followed by the \emph{Concentric Mixture of Plackett-Luce Model} in \Cref{subsec:concentric_plackettluce}, which is a new model proposed in this work. 

\subsection{The Concentric Mixture of Mallows Model}
\label{subsec:concentric_mallows}

The \emph{Concentric Mixture of Mallows Model} (\texttt{CMM}) \parencite{CI21} uses a distance-based approach to quantify deviations from the central ranking. Specifically, group $g$’s ranking is modeled as a Mallows model with a group-specific dispersion parameter $\phi_g$, which controls the degree of expertise of the group. 
The following equation describes the ranking observed by a voter where the voting population has $G$ distinct groups:

\begin{equation}
\label{mallows-mixture-vote-K}
   \Prs(\sigma \mid \sigma^\star) = \sum_{g=1}^{G} p_g \cdot \Prs(\sigma \mid \sigma^\star, \phi_g)
\end{equation}

Here $\sigma^\star$ is the underlying ground-truth ranking, and $\Prs(\sigma \mid \sigma^\star, \phi_g)$ is the probability of a voter observing the ranking $\sigma$ given the ground-truth ranking $\sigma^\star$ and the dispersion parameter $\phi_g$ for group $g$. The parameter $p_g$ represents the probability of voter $i$ belonging to group $g$, where $\sum_{g=1}^{G} p_g = 1$. In the Concentric Mixture of Mallows model, the probability $\Prs(\sigma \mid \sigma^\star, \phi_g)$ is defined as:

\begin{equation}
\label{eq:mallows}
\Prs(\sigma \mid \sigma^\star, \phi_g) = \frac{\phi_g^{d(\sigma, \sigma^\star)}}{Z(\phi_g, m)}
\end{equation}

where $d(\sigma, \sigma^\star)$ is the Kendall-Tau distance between the observed ranking $\sigma$ and the central ranking $\sigma^\star$, and $Z(\phi_g, m)$ is the normalization constant that ensures that the probabilities sum to $1$ across all possible rankings. We will assume that $\phi_1 \le \phi_2 \le \ldots \le \phi_G$. Note that, a smaller value of the dispersion parameter implies that the group is more expert i.e. likely to observe a ranking closer to the ground truth ranking.

For the case of two groups (i.e. $G=2$), \citet{CI21} analyzed the identifiability and sample complexity of the concentric mixture model under the Borda voting rule. Our first goal is to analyze the same model under the SP-Voting rule and an arbitrary number of groups. There are two main steps in the analysis of SP-Voting
\begin{enumerate}
    \item \emph{Identification}: determine the condition needed to ensure $$\overline{V}(\sigma^\star) \ge 2 \cdot \max_{\tau: d(\tau, \sigma^\star)\ge 1} \overline{V}(\tau),$$ so that maximizing {prediction-normalized-vote} returns the ground truth.
    \item \emph{Sample Complexity}: when the identification condition holds, determine the number of samples necessary to ensure $$\widehat{\overline{V}}(\sigma^\star) > \max_{\tau: d(\tau, \sigma^\star)\ge 1} \widehat{\overline{V}}(\tau),$$ so that maximizing the prediction-normalized votes from samples returns the ground truth.
\end{enumerate}

For the setting of $G=2$, the following result regarding identifying the \texttt{CMM} model has already been proved \parencite{hosseini2024surprising}. \footnote{The results were originally proved for partial rankings with $G=2$ \parencite{hosseini2024surprising} but here we present a simplified version for full rankings.}
\begin{lemma}[\citet{hosseini2024surprising}]\label{lem:CMM-G-2}
    Suppose $p_1 \le 1/2$ and the following condition holds.
    $$
    \left( \frac{p_1}{1-p_1}\right)^2 \ge 2 \cdot \frac{Z(\phi_2)^3}{Z(\phi_1)^2} \phi_1^{m(m-1)/2}
    $$
    Then for any $\tau$ with $d(\tau, \sigma^\star) \ge 1$ we have $\overline{V}(\sigma^\star) \ge 2 \overline{V}(\tau)$.
\end{lemma}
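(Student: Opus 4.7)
The plan is to work in the population (infinite-sample) limit where, by the law of large numbers, the empirical vote frequency $f(\sigma)$ converges to $\Prs(\sigma \mid \sigma^\star)$ and the average predicted frequency $h(\sigma' \mid \sigma)$ converges to the Bayesian posterior $\Pro(\sigma' \mid \sigma)$. Under the two-group \texttt{CMM}, both quantities decompose as explicit mixtures with parameters $(p_1, \phi_1)$ and $(1-p_1, \phi_2)$, so identification reduces to separately lower-bounding $\overline{V}(\sigma^\star)$ and upper-bounding $\overline{V}(\tau)$ for every $\tau$ with $d(\tau, \sigma^\star) \ge 1$, and then checking that the stated condition forces a factor-of-two gap.

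I would first handle the vote-frequency factor. Since $\sigma^\star$ is the mode of each group's Mallows distribution, $\Prs(\sigma^\star \mid \sigma^\star) = p_1/Z(\phi_1) + (1-p_1)/Z(\phi_2) \ge p_1/Z(\phi_1)$, while $\Prs(\tau \mid \sigma^\star) \le p_1 \phi_1/Z(\phi_1) + (1-p_1)/Z(\phi_2)$ via $\phi_1^d \le \phi_1$ for $d \ge 1$. The prediction ratio-sum $R(\sigma) = \sum_{\sigma'} h(\sigma' \mid \sigma)/h(\sigma \mid \sigma')$ is the more delicate factor. For $\sigma = \sigma^\star$, retaining the diagonal $\sigma' = \sigma^\star$ term already gives $R(\sigma^\star) \ge 1$, and a sharper estimate extracts an additional factor $\gtrsim p_1/Z(\phi_1)$ from the self-consistency of the expert group's posterior. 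For $\sigma = \tau$, I would upper-bound each ratio termwise: bound the numerator $h(\sigma' \mid \tau) \lesssim 1/Z(\phi_2)$, and lower-bound the denominator $h(\tau \mid \sigma') \gtrsim p_1 \phi_1^{d(\tau, \sigma')}/Z(\phi_1)$ by keeping only the expert contribution to the Bayesian posterior. In the worst case over $\sigma'$, this lower bound degrades to $p_1 \phi_1^{m(m-1)/2}/Z(\phi_1)$ since $m(m-1)/2$ is the maximal Kendall-Tau distance, which is exactly the factor appearing in the lemma.

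Combining these, $\overline{V}(\sigma^\star)/\overline{V}(\tau)$ factors as (ratio of $f$'s) $\times$ (ratio of $R$'s). The $f$-ratio contributes a factor $p_1/(1-p_1)$ with normalization $Z(\phi_2)/Z(\phi_1)$, and the $R$-ratio contributes a second factor $p_1/(1-p_1)$ together with additional $Z(\phi_2)^2/Z(\phi_1)$ normalization factors and a $1/\phi_1^{m(m-1)/2}$ coming from the worst-case denominator bound. Requiring the product to exceed $2$ then yields exactly $(p_1/(1-p_1))^2 \ge 2\, Z(\phi_2)^3/Z(\phi_1)^2 \cdot \phi_1^{m(m-1)/2}$. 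The hard part will be establishing the uniform lower bound on $h(\tau \mid \sigma')$: because the Bayesian posterior marginalizes over all possible ground truths, naive pointwise bounds can lose a factor as large as $m!$. The $\phi_1^{m(m-1)/2}$ term in the condition is the price paid for a uniform bound that works for every $\sigma'$; a tighter argument would pair each $\sigma'$ with a swap-image closer to $\tau$ and exploit the monotonicity of the Mallows density in Kendall-Tau distance, though the crude worst-case bound is already sufficient for the stated sufficient condition.
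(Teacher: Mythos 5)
First, a point of context: the paper does not prove this lemma --- it is imported verbatim from \citet{hosseini2024surprising} --- so the only in-paper material to compare against is the appendix proofs of the sibling results (\Cref{lem:CMM-G-gen,lem:PL-2-separation}), which use exactly the sandwich bound $\frac{f(\sigma)}{\sum_{\tilde{\sigma}} \Prs(\sigma \mid \tilde{\sigma})} \le \overline{V}(\sigma) \le \frac{f(\sigma)}{\min_{\tilde{\sigma}} \Prs(\sigma \mid \tilde{\sigma})}$ that your sketch is gesturing at. So your template is the right one. The problem is in the assembly step, which you assert ``yields exactly'' the stated condition but which, followed literally, does not.

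Concretely: your upper bound on $\overline{V}(\tau)$ divides by $\min_{\sigma'} h(\tau \mid \sigma') \gtrsim p_1 \phi_1^{m(m-1)/2}/Z(\phi_1)$, so the factor $\phi_1^{m(m-1)/2}$ appears in the \emph{denominator} of that upper bound. When you then require $\overline{V}(\sigma^\star) \ge 2\,\overline{V}(\tau)$ and move everything to the right-hand side, you obtain a condition of the form $\bigl(\tfrac{p_1}{1-p_1}\bigr)^2 \ge 2\,C \cdot \phi_1^{-m(m-1)/2}$, not $\phi_1^{+m(m-1)/2}$ as in the lemma (and the $Z$ factors similarly do not come out as $Z(\phi_2)^3/Z(\phi_1)^2$ under your bookkeeping). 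This is not cosmetic: since $\phi_1 < 1$ and $(p_1/(1-p_1))^2 \le 1$, a right-hand side proportional to $\phi_1^{-m(m-1)/2}$ is unsatisfiable, whereas the lemma's condition is easily satisfiable for small $\phi_1$. Relatedly, your closing claim that ``the crude worst-case bound is already sufficient'' cannot be right: for a $\tau$ with $d(\tau,\sigma^\star)=1$ the vote-frequency ratio satisfies $f(\sigma^\star)/f(\tau) \le \tfrac{2(1-p_1)}{p_1 \phi_1}$, which is nowhere near large enough to absorb a $\phi_1^{-m(m-1)/2}$ loss from the prediction term. So the refinement you mention only in passing (pairing each $\sigma'$ with a nearby swap-image, or otherwise avoiding the uniform worst-case bound on $h(\tau\mid\sigma')$) is not an optional sharpening --- it is the missing idea without which the argument does not close, and the sketch as written does not derive the stated inequality.
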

The above result says that if the non-experts are too noisy (i.e. $\phi_2 \gg \phi_1$) then the fraction of experts $p_1$ cannot be too small. Next we generalize the lemma for the case of arbitrary number of groups.

\begin{lemma}\label{lem:CMM-G-gen}
    Suppose the set $G$ can be partitioned into sets $G_1 = \set{1,2,\ldots,s}$ and $G_2 = \set{s+1,\ldots,G}$. Let $\alpha = \sum_{j \in G_1} p_j$ and the following condition holds.
$$
\frac{\alpha}{Z(\phi_s)} + \frac{1-\alpha}{Z(\phi_G)} \ge 2 \left( \frac{\phi_s}{Z(\phi_1)} \alpha + \frac{\phi_G}{Z(\phi_{s+1})} (1-\alpha) \right)
$$
    Then we are guaranteed that $\overline{V}(\sigma^\star) \ge 2 \overline{V}(\tau)$ for any $\tau$ such that $d(\tau, \sigma^\star) \ge 1$.
\end{lemma}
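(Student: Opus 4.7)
The plan is to reduce the target inequality to a comparison of the raw observation frequencies $f(\sigma^\star)$ and $f(\tau)$ in the population limit, and then verify this simpler inequality by separately lower-bounding $f(\sigma^\star)$ and upper-bounding $f(\tau)$ using the partition $G = G_1 \cup G_2$.

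I would first establish the reduction. Assuming a uniform prior $P$ on $\mathcal{L}(A)$, the denominator in \eqref{defn:full-posterior} satisfies $\sum_{\sigma'} \Prs(\sigma \mid \sigma') = \sum_g p_g \cdot Z(\phi_g)/Z(\phi_g) = 1$, exploiting the fact that the Mallows normalizer $Z(\phi_g, m)$ is center-invariant. Hence $\Prg(\sigma^{\star'} \mid \sigma) = \Prs(\sigma \mid \sigma^{\star'})$, and the Bayes-optimal prediction of another voter's observation becomes $h(\sigma' \mid \sigma) = \sum_{\sigma^{\star'}} \Prs(\sigma \mid \sigma^{\star'}) \Prs(\sigma' \mid \sigma^{\star'})$, which is manifestly symmetric in $\sigma$ and $\sigma'$. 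Consequently each ratio $h(\sigma' \mid \sigma)/h(\sigma \mid \sigma')$ equals $1$, so $\overline{V}(\sigma) = m! \cdot f(\sigma)$, and the claim $\overline{V}(\sigma^\star) \ge 2\overline{V}(\tau)$ is equivalent to $f(\sigma^\star) \ge 2 f(\tau)$.

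With the reduction in hand, I would bound each frequency using the partition. Since $f(\sigma^\star) = \sum_g p_g/Z(\phi_g)$ and $Z(\phi, m)$ is monotonically increasing in $\phi$, I get $1/Z(\phi_g) \ge 1/Z(\phi_s)$ for $g \in G_1$ and $1/Z(\phi_g) \ge 1/Z(\phi_G)$ for $g \in G_2$, which summed group-wise yields exactly the LHS of the lemma's hypothesis. For $f(\tau) = \sum_g p_g \phi_g^{d(\tau,\sigma^\star)}/Z(\phi_g)$, the assumption $d(\tau,\sigma^\star) \ge 1$ together with $\phi_g \in (0,1]$ gives $\phi_g^{d(\tau,\sigma^\star)} \le \phi_g$; then applying $\phi_g \le \phi_s$ with $1/Z(\phi_g) \le 1/Z(\phi_1)$ on $G_1$, and the analogous pair $\phi_g \le \phi_G$, $1/Z(\phi_g) \le 1/Z(\phi_{s+1})$ on $G_2$, produces exactly half the RHS of the hypothesis. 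Plugging these two bounds into the hypothesis immediately gives $f(\sigma^\star) \ge 2 f(\tau)$, and therefore $\overline{V}(\sigma^\star) \ge 2 \overline{V}(\tau)$.

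The delicate step is the initial reduction: the collapse of the prediction-normalization factor to a constant $m!$ relies on both the center-invariance of $Z(\phi_g, m)$ in the Mallows family and a uniform prior over $\sigma^\star$, so that the kernel $\sum_{\sigma^{\star'}} \Prs(\sigma\mid \sigma^{\star'})\Prs(\sigma' \mid \sigma^{\star'})$ is symmetric. Once this symmetry is recognized, the rest is routine monotonicity bookkeeping that cleanly reflects the ordering $\phi_1 \le \cdots \le \phi_G$ through the cut index $s$.
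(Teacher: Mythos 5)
Your endgame coincides with the paper's: both proofs reduce the claim to the frequency comparison $f(\sigma^\star) \ge 2 f(\tau)$ and then apply the identical group-wise monotonicity bounds ($1/Z(\phi_g) \ge 1/Z(\phi_s)$ on $G_1$, $1/Z(\phi_g) \ge 1/Z(\phi_G)$ on $G_2$ for the lower bound; $\phi_g^{d} \le \phi_g \le \phi_s$ with $1/Z(\phi_g) \le 1/Z(\phi_1)$ on $G_1$, and the analogous pair on $G_2$, for the upper bound). Where you differ is the reduction. The paper imports the one-sided sandwich $\frac{f(\sigma)}{\sum_{\tilde{\sigma}} \Prs(\sigma \mid \tilde{\sigma})} \le \overline{V}(\sigma) \le \frac{f(\sigma)}{\min_{\tilde{\sigma}} \Prs(\sigma \mid \tilde{\sigma})}$ from prior work and then evaluates the denominators using the same center-invariance identity $\sum_{\tilde{\sigma}} \phi^{d(\sigma,\tilde{\sigma})} = Z(\phi)$ that you use; you instead prove the exact identity $\overline{V}(\sigma) = m! \cdot f(\sigma)$ by showing the kernel $\sum_{\sigma''}\Prs(\sigma \mid \sigma'')\Prs(\sigma' \mid \sigma'')$ is symmetric, which collapses every ratio $h(\sigma'\mid\sigma)/h(\sigma\mid\sigma')$ to $1$. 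Your route is sharper and more transparent, and it is internally consistent with the bounds the paper actually ends up with. Two caveats you should make explicit. First, your reduction requires the prior $P$ over ground truths to be uniform (otherwise $\Prg(\cdot \mid \sigma) \neq \Prs(\sigma \mid \cdot)$ and the symmetry of $h$ fails); the lemma statement is silent on the prior, so you are adding a hypothesis --- if a general prior is intended, you must fall back on the one-sided sandwich bounds rather than the exact identity. Second, your identity $\overline{V}(\sigma) = m!\, f(\sigma)$ is a strong structural claim: it says that under the concentric Mallows mixture with a uniform prior, SP-voting is exactly plurality over rankings, so the prediction reports carry no information. That is worth flagging as a consequence rather than burying it as a lemma-internal step, since it bears on the interpretation of the whole result.
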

The proof is provided in the appendix where we generalize \cref{lem:CMM-G-2} and also simplify the conditions required for identification. One way to interpret the result is that when the experts are in the minority i.e. $\alpha \ll 1/2$ then we need $Z(\phi_{s+1}) \ge 2 \phi_G Z(\phi_G)$ i.e. the dispersion parameter of the best non-expert should be sufficiently large. In the next subsection, we derive identifiability results under a different concentric mixture model, and then later provide sample complexity of SP-Voting under different rank-order models.

\subsection{The Concentric Mixture of Plackett-Luce Model}
\label{subsec:concentric_plackettluce}

In this subsection, we introduce the \emph{Concentric Mixture of Plackett-Luce Model} (\texttt{CMPL}), which uses an element-specific probabilistic framework to rank alternatives based on their relative probabilities within each group. Specifically, group $g$'s ranking is modelled as a Plackett-Luce model with a group-specific parameter vector $\theta_g \in \mathrm{R}^m_+$. 
As before, the following equation describes the ranking observed by a voter, where the voting population is divided into $G$ distinct groups:

\begin{equation}
\label{plackett-luce-mixture-vote-G}
   \Prs(\sigma \mid \sigma^\star, \boldsymbol{\theta}) = \sum_{g=1}^{G} p_g \cdot \Prs(\sigma \mid \sigma^\star, \theta_g)
\end{equation}

Here $\sigma^\star$ is the ground-truth ranking, and $\theta_g$ is the vector of strength parameters for group $g$. The parameter $p_g$ represents the probability that voter $i$ belongs to group $g$, where the mixture weights satisfy the constraint $\sum_{g=1}^{G} p_g = 1$. 
In the Concentric mixture of Plackett-Luce model, the probability $\Prs(\sigma \mid \sigma^\star, \theta_g)$ is defined as:

\begin{equation}
\label{eq:plackett_luce}
\Prs(\sigma \mid \sigma^\star, \theta_g) = \prod_{j=1}^{m} \frac{\theta_{g, \sigma^{\star^{-1}}(\sigma(j) )}}{\sum_{\ell=j}^{m} \theta_{g, \sigma^{\star^{-1}} (\sigma(\ell) )}}
\end{equation}
Here, $\sigma(j)$ denotes the alternative assigned to the $j$-th position in the ranking $\sigma$, while $\sigma^{\star^{-1}}(\sigma(j))$ denotes the position of the alternative $\sigma(j)$ in the ranking $\sigma^\star$. \Cref{eq:plackett_luce} describes a Plackett-Luce model with ground truth $\sigma^\star$ and strength parameter vector $\theta_g$, as
$\theta_{g, \sigma^{\star^{-1}}(\sigma(j)) }$ represents the strength parameter for that alternative within group $g$, and, the denominator, $\sum_{\ell=j}^{m} \theta_{g, \sigma^{\star^{-1}}(\sigma(\ell) )}$, ensures that the probability of selecting each alternative is normalized, considering only the alternatives that remain to be ranked. 


\subsubsection{Constraints on Strength Parameters}
\label{subsubsec:constraints}
Recall that in the concentric mixture of Mallows model the groups were ranked according to their dispersion parameters, i.e. $\phi_{g_1} \le \phi_{g_2}$ implies that group $g_1$ is more expert compared to the group $g_2$. We now impose a similar condition on the parameters of the concentric mixture of Plackett-Luce model. 


The strength parameters $\theta_g$ for each group are subject to two key constraints:

\begin{itemize}
    \item \textbf{Within-group constraint}: For each group $g$, the sum of the strength parameters equals $1$\footnote{The constant $1$ can be arbitrary, but must be the same across the groups.}, ensuring that the sum of the parameters is identical across the $G$ groups. 
    \[
    \sum_{j=1}^{m} \theta_{g,j} = 1 \quad \forall g \in \{1, \dots, G\}.
    \]
    Additionally, the entries in $\theta_g$ are non-increasing i.e. $\theta_{g,i} \ge \theta_{g,j}$ for $i \ge j$.
   \item \textbf{Between-group constraint}: The strength parameters for the higher-expertise group should stochastically dominate those of the lower-expertise groups. In particular, for any location $\ell $ the following condition must hold.
    
    
    
    \[
    \sum_{j = 1}^\ell \theta_{1,j} \ge \sum_{j = 1}^\ell \theta_{2,j} \ge \dots \ge \sum_{j = 1}^\ell \theta_{G,j} \quad \forall \ell \in \{1, \dots, m\}.
    \]
    
    This hierarchical constraint ensures that the behavior of the groups is ordered in a way that reflects their relative strengths, with group $1$ being closest to the ground-truth ranking, and subsequent groups deviating further from it.
\end{itemize}

We now turn to derive the identification condition to ensure that the ground truth ranking is the unique ranking to maximize the prediction-normalized vote. The next lemma gives a sufficient condition under the \texttt{CMPL} model and two groups.

\begin{lemma}\label{lem:PL-2-separation}
Suppose $p_1 \le 1/2$ and the following condition holds.
 $$
    \left( \frac{p_1}{1-p_1}\right)^2 \ge 2 \cdot \left( \prod_{j=1}^m \frac{\theta_{2,j}}{\sum_{i=j}^m  \theta_{2,i}}\right) \left( \prod_{j=1}^m \frac{\theta_{1,j}}{\sum_{i=j}^m  \theta_{1,i}}\right)^{-1} \left( \prod_{j=1}^m \frac{\theta_{1,m-j+1}}{\sum_{i=j}^m  \theta_{1,m-i+1}}\right)
    $$
    Then for any ranking $\tau$ with $d(\tau, \sigma^\star) \ge 1$ we are guaranteed that $\overline{V}(\sigma^\star) \ge 2 \overline{V}(\tau)$.
\end{lemma}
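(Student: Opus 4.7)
My plan is to adapt the argument behind \cref{lem:CMM-G-2} from the Mallows to the Plackett-Luce setting. In the large-sample limit we have $f(\sigma) = \Prs(\sigma|\sigma^\star)$ and $h(\sigma'|\sigma) = \Pro(\sigma'|\sigma)$, so
\[
\overline{V}(\sigma) \;=\; \Prs(\sigma|\sigma^\star) \cdot \sum_{\sigma' \in \mathcal{L}(A)} \frac{\Pro(\sigma'|\sigma)}{\Pro(\sigma|\sigma')},
\]
and both $\Prs$ and $\Pro$ decompose into weighted contributions from the expert group (weight $p_1$, strengths $\theta_1$) and the non-expert group (weight $1-p_1$, strengths $\theta_2$) via \cref{plackett-luce-mixture-vote-G}.

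The first step is to lower bound $\overline{V}(\sigma^\star)$. Keeping only the diagonal term $\sigma'=\sigma^\star$ in the summation (which contributes $1$) and dropping the group-$2$ contribution to $f(\sigma^\star)$ yields $\overline{V}(\sigma^\star) \ge p_1 \cdot \prod_{j=1}^{m} \theta_{1,j}/\sum_{i=j}^{m} \theta_{1,i}$. This is the source of the expert identity-likelihood $\prod_{j} \theta_{1,j}/\sum_{i \ge j} \theta_{1,i}$ sitting in the denominator of the hypothesis.

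The second and more delicate step is to upper bound $\overline{V}(\tau)$ for an arbitrary $\tau$ with $d(\tau,\sigma^\star) \ge 1$. For the frequency, the within-group constraint $\theta_{g,1} \ge \cdots \ge \theta_{g,m}$ implies $\Prs(\tau|\sigma^\star,\theta_g) \le \Prs(\sigma^\star|\sigma^\star,\theta_g)$ for each $g$. For the posterior-ratio sum $\sum_{\sigma'} \Pro(\sigma'|\tau)/\Pro(\tau|\sigma')$, I would expand each $\Pro(\sigma'|\sigma) = \sum_{\tilde\sigma} \Prs(\sigma'|\tilde\sigma)\Prg(\tilde\sigma|\sigma)$ via Bayes rule with the prior $P$ and the mixture density, and then bound the resulting ratios group by group. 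The key observation is that, because $\theta_1$ stochastically dominates $\theta_2$ and both vectors are non-increasing, the worst case over $\sigma'$ in the sum corresponds to the ranking on which the expert group places the \emph{least} mass --- namely, the reverse of $\sigma^\star$, whose expert-PL probability is exactly $\prod_{j} \theta_{1,m-j+1}/\sum_{i=j}^{m} \theta_{1,m-i+1}$, the third factor in the statement. A parallel bound from the non-expert group produces the factor $\Prs(\sigma^\star|\sigma^\star,\theta_2) = \prod_j \theta_{2,j}/\sum_{i \ge j}\theta_{2,i}$.

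Combining the two bounds yields $\overline{V}(\sigma^\star)/\overline{V}(\tau) \ge (p_1/(1-p_1))^2 \cdot \Prs(\sigma^\star|\sigma^\star,\theta_1) / [\Prs(\sigma^\star|\sigma^\star,\theta_2) \cdot \Prs(\sigma^\star_{\mathrm{rev}}|\sigma^\star,\theta_1)]$, which by the hypothesis is at least $2$. The principal obstacle is the second step: unlike the Mallows model, whose Kendall-tau symmetry in its two arguments lets \cref{lem:CMM-G-2} reuse the same Bayesian identity at both $\sigma^\star$ and $\tau$, the Plackett-Luce likelihood is genuinely asymmetric in $(\sigma,\sigma^\star)$. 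One must therefore argue directly, using (a) the non-increasing structure within each $\theta_g$, and (b) the stochastic-dominance ordering of $\theta_1$ over $\theta_2$ from \Cref{subsubsec:constraints}. Recognising that the binding $\sigma'$ in the posterior-ratio sum is the reverse of $\sigma^\star$ --- rather than some neighbor of $\sigma^\star$ --- is the key insight that pins down the exact form of the stated condition.
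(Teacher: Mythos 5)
There is a genuine gap, and it sits exactly where you flag ``the principal obstacle'': your second step is never carried out, and the pieces you do establish cannot be assembled into the ratio you assert at the end. The paper does not bound the posterior-ratio sum $\sum_{\sigma'} \Pro(\sigma'\mid\sigma)/\Pro(\sigma\mid\sigma')$ term by term at all. Instead it starts from the sandwich
\[
\frac{f(\sigma)}{\sum_{\tilde{\sigma}} \Prs(\sigma \mid \tilde{\sigma})} \;\le\; \overline{V}(\sigma) \;\le\; \frac{f(\sigma)}{\min_{\tilde{\sigma}} \Prs(\sigma \mid \tilde{\sigma})},
\]
imported from Lemma~2 of \citet{hosseini2024surprising}; under a uniform prior the cross term $\sum_{\tilde\sigma}\Prs(\sigma'\mid\tilde\sigma)\Prs(\sigma\mid\tilde\sigma)$ is symmetric in $(\sigma,\sigma')$, so each ratio $\Pro(\sigma'\mid\sigma)/\Pro(\sigma\mid\sigma')$ collapses to a ratio of marginal sums and the Plackett--Luce asymmetry you worry about never enters. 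This also explains where the reverse-ranking probability really comes from: it lower-bounds $\min_{\tilde\sigma}\Prs(\tau\mid\tilde\sigma)$, the likelihood of observing $\tau$ under the \emph{least favourable candidate ground truth} $\tilde\sigma$ (together with stochastic dominance, which gives $\min_{\tilde\sigma}\Prs(\tau\mid\tilde\sigma)\ge 2p_1\prod_j \theta_{1,m-j+1}/\sum_{i=j}^m\theta_{1,m-i+1}$). It is not the ``binding $\sigma'$'' in the posterior-ratio sum; you identify the right ranking but attach it to the wrong quantity, and no actual bound on the ratios $\Pro(\sigma'\mid\tau)/\Pro(\tau\mid\sigma')$ is produced.

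The quantitative pieces also do not fit together. The paper's lower bound is $\overline{V}(\sigma^\star)\ge \frac{p_1}{1-p_1}\prod_j \theta_{2,j}/\sum_{i=j}^m\theta_{2,i}$: the \emph{non-expert} likelihood appears (via $f(\sigma^\star)\ge 2p_1\Prs(\sigma^\star\mid\sigma^\star,\theta_2)$, using stochastic dominance and $p_1\le 1-p_1$), and the factor $\frac{p_1}{1-p_1}$ arises from dividing by $\sum_{\tilde\sigma}\Prs(\sigma^\star\mid\tilde\sigma)\le 2(1-p_1)$, which requires the normalization identity $\sum_{\tilde\sigma}\prod_j u_{\tilde\sigma^{-1}(j)}/\sum_{i=j}^m u_{\tilde\sigma^{-1}(i)}=1$ (\cref{lem:pl-identity}). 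Your diagonal-term bound $\overline{V}(\sigma^\star)\ge p_1\Prs(\sigma^\star\mid\sigma^\star,\theta_1)$ is valid but discards this normalization, so it is only linear in $p_1$ and cannot supply one of the two factors of $p_1/(1-p_1)$ needed for the squared ratio in the hypothesis. Finally, your concluding inequality $\overline{V}(\sigma^\star)/\overline{V}(\tau)\ge (p_1/(1-p_1))^2\,\Prs(\sigma^\star\mid\sigma^\star,\theta_1)/\bigl[\Prs(\sigma^\star\mid\sigma^\star,\theta_2)\Prs(\sigma^\star_{\mathrm{rev}}\mid\sigma^\star,\theta_1)\bigr]$ is reverse-engineered from the target rather than derived: it would force an upper bound on $\overline{V}(\tau)$ proportional to the tiny reverse-ranking probability, whereas the bound that is actually provable, $\overline{V}(\tau)\le \frac{1-p_1}{p_1}\bigl(\prod_j \theta_{1,j}/\sum_{i=j}^m\theta_{1,i}\bigr)\big/\bigl(\prod_j \theta_{1,m-j+1}/\sum_{i=j}^m\theta_{1,m-i+1}\bigr)$, has that probability in the denominator.
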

In order to interpret the condition, let us choose a simple setting of strength parameters. Let $\theta_{1} = (\gamma_1, 1, \ldots, 1) / (\gamma_1 + m-1)$ and similarly $\theta_2 = (\gamma_2, 1, \ldots, 1) / (\gamma_2 + m-1)$. Then it can be verified that  condition of \Cref{lem:PL-2-separation} simplifies to the following,
$$
\left( \frac{p_1}{1-p_1}\right)^2 \ge 2 \frac{\gamma_2}{\gamma_2 + m - 1} \frac{\gamma_1 + m - 1}{\gamma_1} \prod_{j=1}^{m-1} \frac{1}{\gamma_1 + m - j}
$$
and for large enough $m$ we need $\frac{p_1}{1-p_1} \gtrsim \sqrt{2\gamma_2 / \gamma_1} \cdot m^{-(m-1)/2}$. This means that as $\gamma_2$ approaches $\gamma_1$ (i.e. non-experts become close to experts), we need a larger value of $p_1$ (i.e. fraction of experts) to succeed. The next lemma generalizes the identifiability condition to an arbitrary number of groups. 

\begin{lemma}\label{lem:PL-gen-separation}
        Suppose the set $G$ can be partitioned into sets $G_1 = \set{1,2,\ldots,s}$ and $G_2 = \set{s+1,\ldots,G}$. Let $\alpha = \sum_{j \in G_1} p_j$ and the following condition holds.
\begin{align*}
    &\alpha \prod_{j=1}^m \frac{\theta_{s,j}}{\sum_{i=j}^m  \theta_{s,i}} + (1-\alpha) \prod_{j=1}^m \frac{\theta_{G,j}}{\sum_{i=j}^m  \theta_{G,i}} \\
    \ge &\frac{2\alpha \prod_{j=1}^m \frac{\theta_{1,j}}{\sum_{i=j}^m  \theta_{1,i}} + 2(1-\alpha) \prod_{j=1}^m \frac{\theta_{s+1,j}}{\sum_{i=j}^m  \theta_{s+1,i}}}{\alpha  \cdot \prod_{j=1}^m \frac{\theta_{1,m-j+1}}{\sum_{i=j}^m  \theta_{1,m-i+1}} + (1-\alpha) \cdot \prod_{j=1}^m \frac{\theta_{s+1,m-j+1}}{\sum_{i=j}^m  \theta_{s+1,m-i+1}}}
\end{align*}
    Then for any ranking $\tau$ with $d(\tau, \sigma^\star) \ge 1$ we are guaranteed that $\overline{V}(\sigma^\star) \ge 2 \overline{V}(\tau)$.
\end{lemma}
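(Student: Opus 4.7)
The approach is to bound $\overline{V}(\sigma^\star)$ from below and $\overline{V}(\tau)$ from above under the $G$-group \texttt{CMPL} model, collapsing the $G$ groups into two effective super-groups via the between-group stochastic dominance. Write $q_g(\sigma) = \Prs(\sigma\mid\sigma^\star,\theta_g)$ for the Plackett--Luce probability of observing $\sigma$ under group $g$; in particular, $q_g(\sigma^\star) = \prod_{j=1}^m \theta_{g,j}/\sum_{i=j}^m \theta_{g,i}$. The within-group (non-increasing $\theta_g$) constraint makes $\sigma^\star$ the mode of each group's PL distribution, and a short monotonicity argument using the between-group stochastic dominance yields that $q_g(\sigma^\star)$ is non-increasing in $g$. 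Partitioning into $G_1=\{1,\dots,s\}$ of total mass $\alpha$ and $G_2=\{s+1,\dots,G\}$ of mass $1-\alpha$, these monotonicities imply
$$f(\sigma^\star) \ge \alpha\,q_s(\sigma^\star) + (1-\alpha)\,q_G(\sigma^\star),$$
since the more-expert members of each super-group generate $\sigma^\star$ at least as often as the least-expert members $s$ and $G$. An analogous argument upper bounds $f(\tau) \le \alpha\,q_1(\sigma^\star) + (1-\alpha)\,q_{s+1}(\sigma^\star)$ for any $\tau$ with $d(\tau,\sigma^\star)\ge 1$, using $q_g(\tau)\le q_g(\sigma^\star)$ together with the dominance bound within each super-group.

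Next, I would control the prediction-ratio factor $\sum_{\sigma'} h(\sigma'\mid\sigma)/h(\sigma\mid\sigma')$ following the structure of the proof of \Cref{lem:PL-2-separation}. Assuming a uniform prior $P$ over $\mathcal{L}(A)$, translation-invariance of the PL mixture makes $\sum_{\tau'}\Prs(\sigma\mid\tau') = 1$ independent of $\sigma$, so $\Prg(\tau'\mid\sigma) = \Prs(\sigma\mid\tau')$ and $h(\sigma'\mid\sigma) = \sum_{\tau'}\Prs(\sigma'\mid\tau')\Prs(\sigma\mid\tau')$. To lower bound $\overline{V}(\sigma^\star)$, I retain only the $\sigma'=\bar\sigma^\star$ summand of the outer sum (where $\bar\sigma^\star$ is the reverse of $\sigma^\star$): expanding $h(\sigma^\star\mid\bar\sigma^\star)$ and upper bounding it group-wise produces an inverse factor $1/[\alpha\,q_1(\bar\sigma^\star)+(1-\alpha)\,q_{s+1}(\bar\sigma^\star)]$, where one uses the most-expert members $1$ and $s+1$ of each super-group because their $q_g(\bar\sigma^\star)$ is the largest in its super-group and so gives the tightest bound on the reciprocal. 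A symmetric upper bound on $\overline{V}(\tau)$ together with the target gap $\overline{V}(\sigma^\star)\ge 2\overline{V}(\tau)$ then produces the prefactor $2$ in the numerator of the claimed inequality.

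Rearranging the combined bounds gives the statement. The main obstacle is the algebraic control of the prediction-ratio term. The Mallows analysis of \Cref{lem:CMM-G-gen} benefits from the normalizer $Z(\phi_g)$ cleanly absorbing the permutation sum, but PL posteriors are products over $m$ positions that do not factor across groups, so the super-group reduction must be threaded through each positional factor and reassembled carefully. Verifying that the resulting bounds collapse to the compact form displayed in the statement---with denominator $\alpha\,q_1(\bar\sigma^\star)+(1-\alpha)\,q_{s+1}(\bar\sigma^\star)$ rather than a messier product-of-mixtures---is where I expect the bulk of the algebra to lie; I would attack it by mimicking the positional pairwise argument of \Cref{lem:PL-2-separation} within each super-group and then convexly combining with weights $\alpha$ and $1-\alpha$.
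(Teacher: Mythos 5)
Your treatment of the vote frequencies matches the paper: you lower bound $f(\sigma^\star)=\sum_\ell p_\ell\, q_\ell(\sigma^\star)$ by $\alpha q_s(\sigma^\star)+(1-\alpha)q_G(\sigma^\star)$ and upper bound $f(\tau)$ by $\alpha q_1(\sigma^\star)+(1-\alpha)q_{s+1}(\sigma^\star)$ using the mode property and between-group dominance, exactly as the paper does. The gap is in the prediction-normalization term, and it is a real one, not just unfinished algebra. The paper does not expand $\sum_{\sigma'}h(\sigma'\mid\sigma)/h(\sigma\mid\sigma')$ term by term; it imports the sandwich bound $f(\sigma)/\sum_{\tilde{\sigma}}\Prs(\sigma\mid\tilde{\sigma})\le \overline{V}(\sigma)\le f(\sigma)/\min_{\tilde{\sigma}}\Prs(\sigma\mid\tilde{\sigma})$ and then invokes the normalization identity $\sum_{\tilde{\sigma}}\prod_{j}u_{\tilde{\sigma}^{-1}(j)}/\sum_{i=j}^m u_{\tilde{\sigma}^{-1}(i)}=1$ (\Cref{lem:pl-identity}, proved by induction) to make the denominator of the lower bound on $\overline{V}(\sigma^\star)$ exactly $\sum_\ell p_\ell=1$. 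This identity is precisely what dissolves the ``messy product-of-mixtures'' you flag as the main obstacle; without it your left-hand side does not collapse to the bare mixture $\alpha q_s(\sigma^\star)+(1-\alpha)q_G(\sigma^\star)$ that appears in the lemma.

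Your proposed substitute---keeping only the $\sigma'=\bar\sigma^\star$ summand---is a valid lower bound on the sum but produces $f(\sigma^\star)\cdot h(\bar\sigma^\star\mid\sigma^\star)/h(\sigma^\star\mid\bar\sigma^\star)$, and you only bound the denominator of that ratio. The numerator $h(\bar\sigma^\star\mid\sigma^\star)$, the predicted mass on the \emph{reversed} ranking, is itself exponentially small and would contribute an extra factor absent from the stated condition, so this route yields a strictly more demanding hypothesis than the lemma claims. Relatedly, you attach the reverse-ranking mixture $\alpha\,q_1(\bar\sigma^\star)+(1-\alpha)\,q_{s+1}(\bar\sigma^\star)$ to the lower bound on $\overline{V}(\sigma^\star)$, whereas in the paper it arises as the lower bound on $\min_{\tilde{\sigma}}\Prs(\tau\mid\tilde{\sigma})$ and therefore sits in the denominator of the upper bound on $\overline{V}(\tau)$ (the right-hand side of the displayed condition). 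To complete the proof you need (i) the sandwich bound on $\overline{V}$, (ii) \Cref{lem:pl-identity} for the $\sigma^\star$ side, and (iii) the observation that for non-increasing $\theta_\ell$ the PL probability is minimized at the reversed relative order, combined with dominance to replace groups $1,\dots,s$ by group $1$ and groups $s+1,\dots,G$ by group $s+1$ on the $\tau$ side.
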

\subsection{Sample Complexity Bounds}
Once we have derived the identifiability conditions, the derivation of sample complexity is relatively straightforward. When the number of samples is large, the empirical prediction-normalized vote $\widehat{V}(\sigma)$ concentrates around $\overline{V}(\sigma)$ with high probability, and the condition $\overline{V}(\sigma^\star) \ge 2 \overline{V}(\tau)$ guarantees that we can always ensure $\widehat{V}(\sigma^\star) \ge \widehat{V}(\tau)$ for any $\tau$ with $d(\tau, \sigma^\star) \ge 1$. Therefore, picking a ranking that maximizes the empirical prediction-normalized votes returns the ground truth ranking. The next lemma states the sample complexity for the \texttt{CMM} model.
\begin{lemma}
    Under the same setting as \Cref{lem:CMM-G-gen}, suppose the number of samples is $n \ge O\left( m! \sqrt{m \log (m/\delta)}\right)$. Then SP-voting recovers the ground truth ranking with probability at least $1-\delta$.
\end{lemma}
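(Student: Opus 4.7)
The plan is to combine the population-level separation guarantee of \Cref{lem:CMM-G-gen} with standard concentration inequalities for the empirical frequency $\widehat{f}$ and the empirical posterior $\widehat{h}$. From \Cref{lem:CMM-G-gen} we know that $\overline{V}(\sigma^\star) \ge 2\,\overline{V}(\tau)$ for every $\tau$ with $d(\tau,\sigma^\star)\ge 1$, so there is a multiplicative gap $\Delta = \overline{V}(\sigma^\star)/2$ that depends only on the model parameters. It therefore suffices to show that, with probability at least $1-\delta$, $|\widehat{\overline{V}}(\sigma) - \overline{V}(\sigma)| \le \Delta/3$ simultaneously for every $\sigma \in \mathcal{L}(A)$; then $\textrm{argmax}_\sigma \widehat{\overline{V}}(\sigma) = \sigma^\star$.

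First I would decompose $\widehat{\overline{V}}(\sigma) - \overline{V}(\sigma)$ into a ``frequency'' contribution proportional to $\widehat{f}(\sigma) - f(\sigma)$ and a ``prediction'' contribution proportional to $\widehat{S}(\sigma) - S(\sigma)$, where $S(\sigma) = \sum_{\sigma'} h(\sigma'\mid\sigma)/h(\sigma\mid\sigma')$. For the frequency term, $\widehat{f}(\sigma)$ is the empirical mean of $n$ iid Bernoullis, so Hoeffding's inequality followed by a union bound over the $m!$ rankings yields $|\widehat{f}(\sigma) - f(\sigma)| \le O(\sqrt{\log(m!/\delta)/n})$ uniformly in $\sigma$, where $\log(m!) = O(m\log m)$.

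The prediction term is more delicate. The empirical $\widehat{h}(\sigma\mid\sigma')$ is an average of $\Pro(\sigma\mid\sigma_i)$ over the random number $n_{\sigma'} = |\{i : \sigma_i = \sigma'\}|$ of voters who report $\sigma'$. Under the \texttt{CMM} model every ranking has probability at least $\phi_G^{m(m-1)/2}/Z(\phi_G,m) = \Omega(1/m!)$, so $f(\sigma')$ is bounded below by a constant times $1/m!$ uniformly in $\sigma'$; a multiplicative Chernoff bound then gives $n_{\sigma'} \ge \Omega(n/m!)$ for every $\sigma'$ simultaneously, provided $n = \Omega(m! \log(m!/\delta))$. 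This is precisely where the $m!$ factor in the final bound enters. Conditional on these lower bounds on $n_{\sigma'}$, another application of Hoeffding with a union bound over pairs $(\sigma,\sigma')$ controls the fluctuations of $\widehat{h}(\sigma\mid\sigma')$ around $h(\sigma\mid\sigma')$.

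The main obstacle is that $\widehat{\overline{V}}$ involves the ratios $\widehat{h}(\sigma'\mid\sigma)/\widehat{h}(\sigma\mid\sigma')$, and a small additive error in the denominator can blow up the ratio. I would handle this by passing to relative error: for $n$ of the claimed size, one shows $\widehat{h}/h \in [1/2, 2]$ uniformly via a combination of the lower bound $h \ge \Omega(1/m!)$ (inherited from the Mallows components) and the Hoeffding step above, so each ratio stays within a constant factor of its population value. Propagating the additive error of $\widehat{f}$ and the relative error of $\widehat{S}$ through the decomposition via the triangle inequality, and choosing $n$ so that the total deviation falls below $\Delta/3$, yields the stated $n \ge O(m!\sqrt{m\log(m/\delta)})$ sample complexity after collecting the $\log(m!) = O(m\log m)$ union-bound factors inside the square root.
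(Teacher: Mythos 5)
Your overall strategy---use the factor-$2$ gap from \Cref{lem:CMM-G-gen}, show uniform concentration of $\widehat{\overline{V}}$ via Hoeffding/Chernoff, and union bound over the $m!$ rankings---is exactly the route the paper indicates (it gives no detailed proof, only a pointer to Corollary~1 of Hosseini et al.\ and the remark that one replaces $k!$ by $m!$ in the union bound). So the architecture is right. The problems are in two load-bearing quantitative claims.

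First, the claim that under the \texttt{CMM} every ranking has probability $\phi_G^{m(m-1)/2}/Z(\phi_G,m) = \Omega(1/m!)$ is false for a fixed dispersion $\phi_G \in (0,1)$. Since $Z(\phi,m) = \sum_{\sigma}\phi^{d(\sigma,\sigma^\star)} \ge m!\,\phi^{m(m-1)/4}$ (by AM--GM over the $m!$ terms, the average Kendall--Tau distance being $m(m-1)/4$), the least likely ranking has probability at most $\phi^{m(m-1)/4}/m! = e^{-\Theta(m^2)}/m!$, which is smaller than $1/m!$ by a factor exponential in $m^2$. The bound $\Omega(1/m!)$ only holds when $\phi_G$ is essentially $1$ (near-uniform non-experts), which is not part of the hypotheses of \Cref{lem:CMM-G-gen}. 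This matters because it is precisely the step you use to guarantee $n_{\sigma'} = \Omega(n/m!)$ for every $\sigma'$ and to keep $\widehat{h}/h$ within $[1/2,2]$; with the correct minimum probability your Chernoff step would demand $n = e^{\Omega(m^2)}$, not $O(m!\,\mathrm{poly}(m))$. (A related subtlety you could exploit instead: under the model as defined, $\Pro(\cdot\mid\sigma_i)$ depends on voter $i$ only through $\sigma_i$, so $\widehat{h}(\sigma\mid\sigma') = \Pro(\sigma\mid\sigma')$ \emph{exactly} whenever at least one voter reports $\sigma'$; the only real estimation error is in $\widehat{f}$, and the issue reduces to which rankings are observed at all and how unobserved $\sigma'$ are handled in the sum defining $\overline{V}$.)

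Second, your own accounting does not land on the stated bound: the intermediate requirement $n = \Omega(m!\log(m!/\delta)) = \Omega\bigl(m!\,(m\log m + \log(1/\delta))\bigr)$ is asymptotically \emph{larger} than the claimed $O\bigl(m!\sqrt{m\log(m/\delta)}\bigr)$, so even granting the $\Omega(1/m!)$ lower bound, the final step ``collecting the union-bound factors inside the square root'' does not follow; you would prove the lemma only with a weaker (larger) sample complexity. To close the argument you need either to remove the per-ranking occupancy requirement (e.g.\ via the exactness of $\widehat{h}$ noted above, concentrating only $\widehat{f}$) or to restate the bound you actually obtain.
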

The proof draws inspiration from \citet{hosseini2024surprising}'s proof of Corollary 1 with the difference being that here we consider $m!$ rankings instead of $k!$ and then use union bound over all subsets.

\section{Experiments}
\label{section:parameter_inference}
In this section, we describe how we infer the parameters for both the \texttt{CMM} and the \emph{CMPL} using a real-world dataset.

\textbf{Dataset.} We use a real-world dataset from a recent online experiment run on SP-voting by \citet{hosseini2024surprising}.\footnote{The dataset can be found here - \url{https://github.com/amrit19/Surprisingly-Popular-Voting-Partial} } The dataset consists of real participants who provide both \textit{Vote} and \textit{Prediction} data across three distinct domains: \emph{Geography}, \emph{Movies}, and \emph{Paintings}. The dataset contains rankings over five alternatives that are selected from a universe of $36$ alternatives. The dataset contains reports from $432$ participants over $12$ questions from each domain. The alternatives are ranked based on the following domain-specific metrics:

\begin{itemize}
    \item \textbf{Countries}: Ranked by population.
    \item \textbf{Movies}: Ranked by gross lifetime box-office earnings.
    \item \textbf{Paintings}: Ranked by auction prices.
\end{itemize}

In addition to their \emph{Votes} over these alternatives, each participant provides their \textit{Prediction} report based on the posterior belief about another participant's votes. The types of prediction reports are based on ranking and can be \emph{Top} (most likely alternative), \emph{Rank} (most likely ranking), \emph{Top}-$t$  (approval of top $t$ alternatives). 


 We fit both variants of the \emph{Concentric Mixture Models}— Mallows and Plackett-Luce— to the dataset to infer the parameters governing the group-specific ranking behaviors. The objective is to capture how different population groups deviate from a shared underlying ground truth ranking.


\textbf{Inference Methodology}. To estimate the parameters of the models, we employ a \emph{Bayesian inference} approach, which allows us to estimate the posterior distributions of the parameters given the observed rankings. In particular, we use \emph{No-U-Turn Sampling (NUTS)} \parencite{hoffman2014no}, an advanced variant of Hamiltonian Monte Carlo (HMC), to sample from the posterior distribution of the parameters.
%
By utilizing this sampling technique, we can obtain accurate estimates of the model parameters, such as the proportion parameters ($p_k$), dispersion parameters ($\phi_g$) for the Mallows model, and strength parameters ($\theta_g$) for the Plackett-Luce model, for each of the population groups.
The use of NUTS also enables us to quantify the uncertainty in the parameter estimates, providing credible intervals for the inferred parameters. This is particularly important when analyzing real-world ranking data, as it allows us to account for variability across different population groups and rankings. Next we discuss the parameter inference for the \texttt{CMM} and \texttt{CMPL} models.
\begin{figure*}[!t]
\centering
\begin{minipage}{0.48\textwidth}
    \centering
    \includegraphics[width=\textwidth]{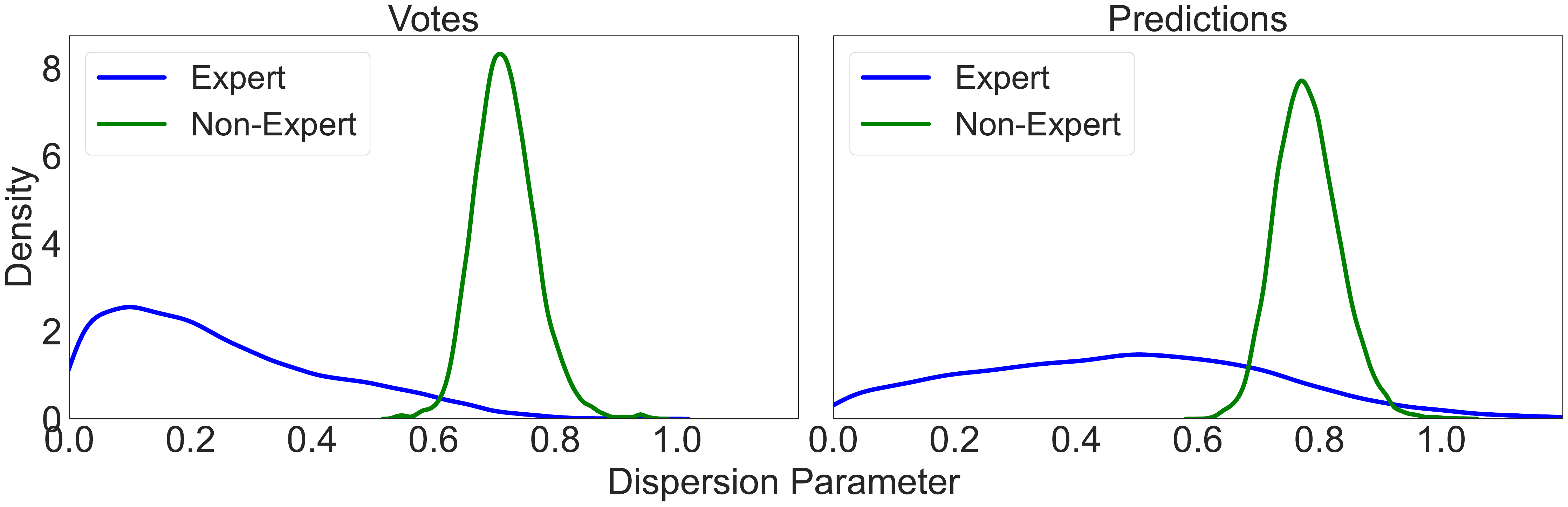} 
    \subcaption{$G=2$}
    \label{fig:mallows_groups2}
\end{minipage}
\hfill
\begin{minipage}{0.48\textwidth}
    \centering
    \includegraphics[width=\textwidth]{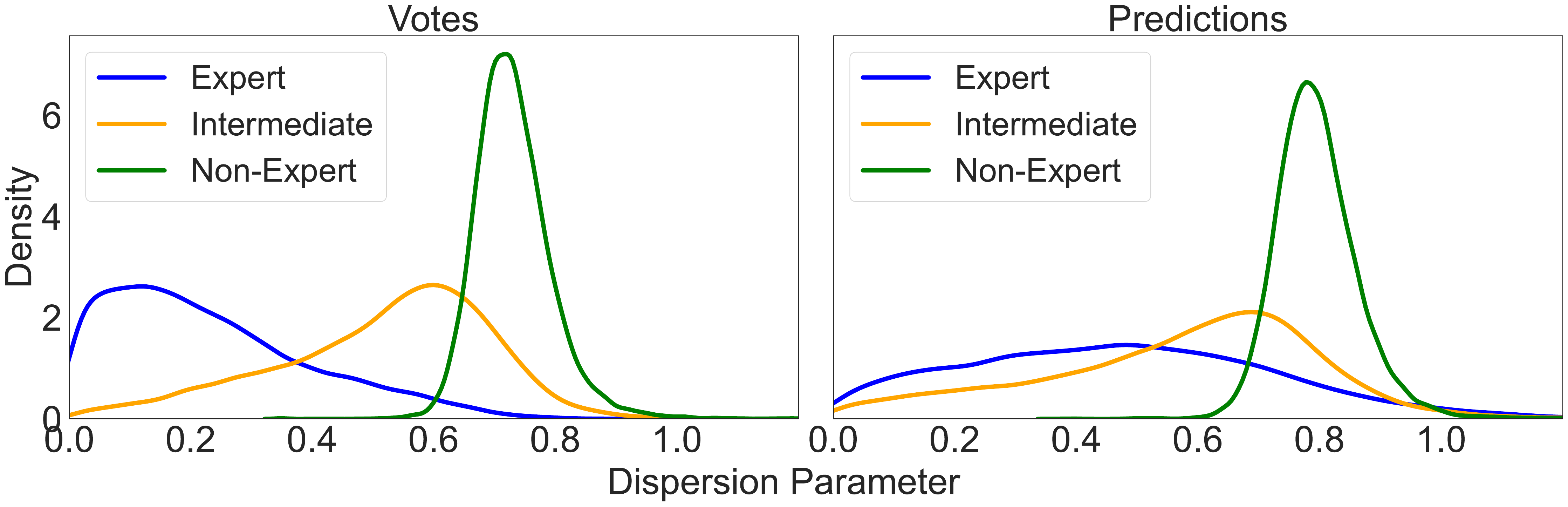}  
    \subcaption{$G=3$}
    \label{fig:mallows_groups3}
\end{minipage}
\caption{Dispersion parameters for Votes and Predictions for $G=2$ and $3$ of \texttt{CMM} model. We see that the \texttt{CMM} model with $G=3$ identifies an intermediate group whose peak lies between experts and non-experts.}
\end{figure*}
\subsection{Concentric Mixture of Mallows}

We fit the \texttt{CMM} with $G=2$ and $3$ groups to the dataset described earlier in this section. Below we describe the parameter inference procedure for $G=3$ groups, the more general case. The three groups are categorized as experts, intermediates, and non-experts. 
We infer several key parameters, including the proportion of each group ($p_k$), the dispersion parameters for experts’ votes ($\phi_{E\text{-votes}}$) and predictions ($\phi_{E\text{-predictions}}$), the dispersion parameters for intermediates’ votes ($\phi_{I\text{-votes}}$) and predictions ($\phi_{I\text{-predictions}}$), and the dispersion parameters for non-experts' votes ($\phi_{NE\text{-votes}}$) and predictions ($\phi_{NE\text{-predictions}}$). 

We first compute the \emph{Kendall-Tau distances} between each participant's vote and prediction rankings and the ground-truth ranking. These Kendall-Tau distances ($\tau_{\text{votes}}$ and $\tau_{\text{predictions}}$) serve as a measure of how much each participant’s rankings deviate from the central ground-truth ordering. The model's priors for the dispersion parameters and the group proportions are specified as follows:

\[
\begin{array}{rl}
    & p \sim \text{Dirichlet}(2, 2, 4) \\
    \phi_{E\text{-votes}} &\sim N(0.1, 0.2), \quad \phi_{E\text{-predictions}} \sim N(0.4, 0.3) \\
    \phi_{I\text{-votes}} &\sim N(0.4, 0.2), \quad \phi_{I\text{-predictions}} \sim N(0.4, 0.3) \\
    \phi_{NE\text{-votes}} &\sim N(0.8, 0.2), \quad \phi_{NE\text{-predictions}} \sim N(0.8, 0.3)
\end{array}
\]

These priors represent our assumptions about the behavior of the three groups, where votes of experts are expected to have the tightest alignment with the ground-truth ranking (small dispersion), intermediates show moderate dispersion, and non-experts have the highest dispersion. On the other hand, the predictions of experts, intermediates, and non-experts have a lot of overlap, representing each voter's opinion of the consensus ranking.

The likelihood function is structured to account for the possibility that each participant could belong to any of the three groups. This implies that the observed Kendall-Tau distances for votes and predictions are modeled as a mixture of normal distributions in the rank space, and we can set a maximum likelihood estimation problem to infer various parameters. In particular,  we run the NUTS algorithm with four chains, each consisting of 8000 iterations, with 2000 iterations reserved for warm-up. 



\Cref{fig:mallows_groups2} and \Cref{fig:mallows_groups3} depict the distribution of dispersion parameters for Votes ($\phi_{\text{votes}}$) and Predictions ($\phi_{\text{predictions}}$) across different groups for $G=2$ and $G=3$. For votes, experts peak at a lower dispersion parameter in both $G=2$ and $G=3$, indicating more agreement, while non-experts peak at higher dispersion, showing greater spread in their voting. Experts show a widespread distribution for predictions since they reflect the majority belief, which deviates from the true belief, while non-experts are even farther away. The addition of the intermediate group in $G=3$ adds valuable insights -- their peak lies between experts and non-experts in votes,  and their prediction distribution is similarly widespread as experts, reflecting the majority belief. This indicates that modeling voter behavior with more than two groups provides a more accurate and nuanced understanding of the data.

\subsection{Concentric Mixture of Plackett-Luce}

\begin{figure*}[t] 
    \centering
    \begin{minipage}{0.49\textwidth}
        \centering
        \includegraphics[width=\textwidth]{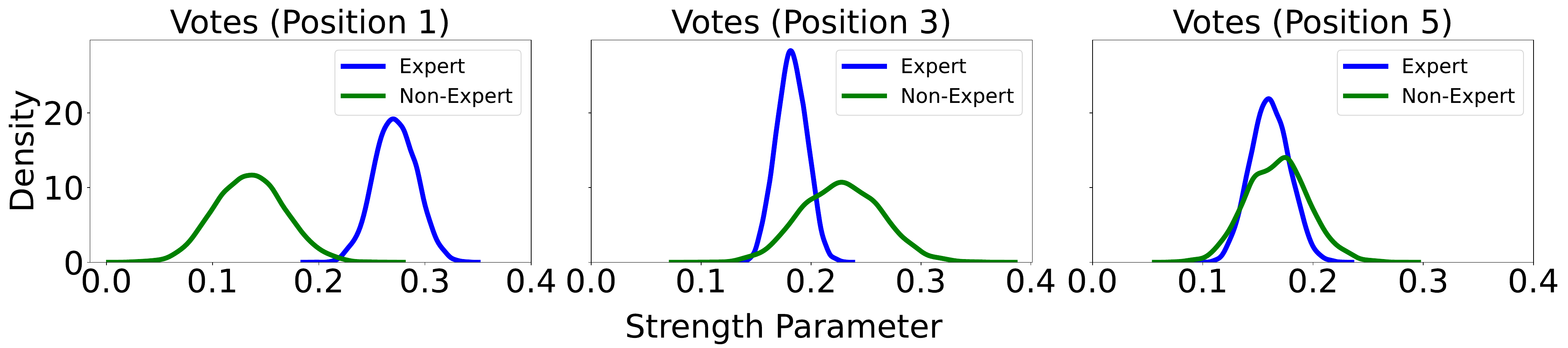}
        \subcaption{G=2}
        \label{fig:plackettluce_groups2}
    \end{minipage}%
    \hfill
    \begin{minipage}{0.49\textwidth}
        \centering
        \includegraphics[width=\textwidth]{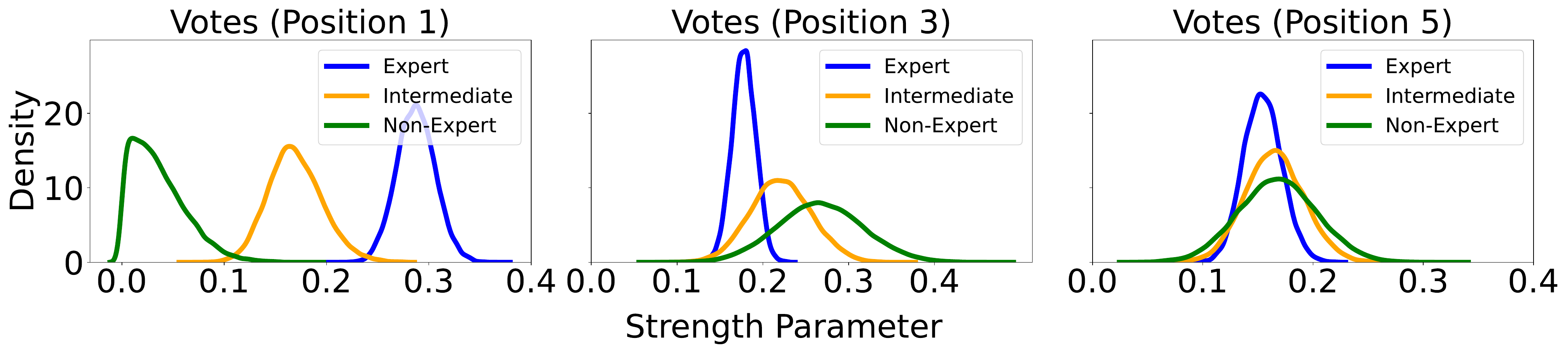}
        \subcaption{G=3}
        \label{fig:plackettluce_groups3}
    \end{minipage}
\caption{Strength parameters for Votes at Positions $1$, $3$, and $5$ for $G=2$ and $3$ of \texttt{CMPL} Model. We observe a stochastic dominance relationship. Initially, the strength parameter of the expert peaks at a large value, but gradually decreases at higher positions to ensure normalization.}
\end{figure*}


We fit the \texttt{CMPL} with $G=2$ and $3$ groups. For $G=3$, the groups are labeled as experts, intermediates, and non-experts. 
Similar to the \texttt{CMM} model, we infer the proportion of each group ($p_k$). Additionally, we infer the strength parameters for experts' votes ($\theta_{E\text{-votes}}$) and predictions ($\theta_{E\text{-predictions}}$), intermediates' votes ($\theta_{I\text{-votes}}$) and predictions ($\theta_{I\text{-predictions}}$), and non-experts' votes ($\theta_{NE\text{-votes}}$) and predictions ($\theta_{NE\text{-predictions}}$). We use the \emph{Inference Method} described earlier in this section, utilizing the No-U-Turn Sampler (NUTS) to explore the parameter space and infer posterior distributions for the model parameters.

\begin{figure*}[!h] 
    \centering
    \begin{minipage}{0.49\textwidth}
        \centering
        \includegraphics[width=0.8\textwidth]{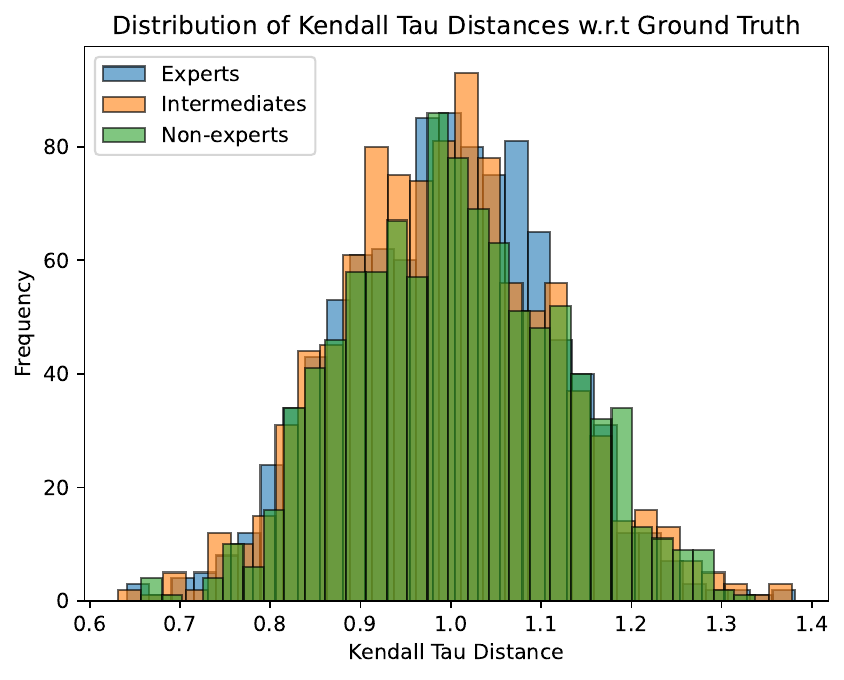}
        \subcaption{CMM}
        \label{fig:partialspmallows_groups3}
    \end{minipage}%
    \hfill
    \begin{minipage}{0.49\textwidth}
        \centering
        \includegraphics[width=0.8\textwidth]{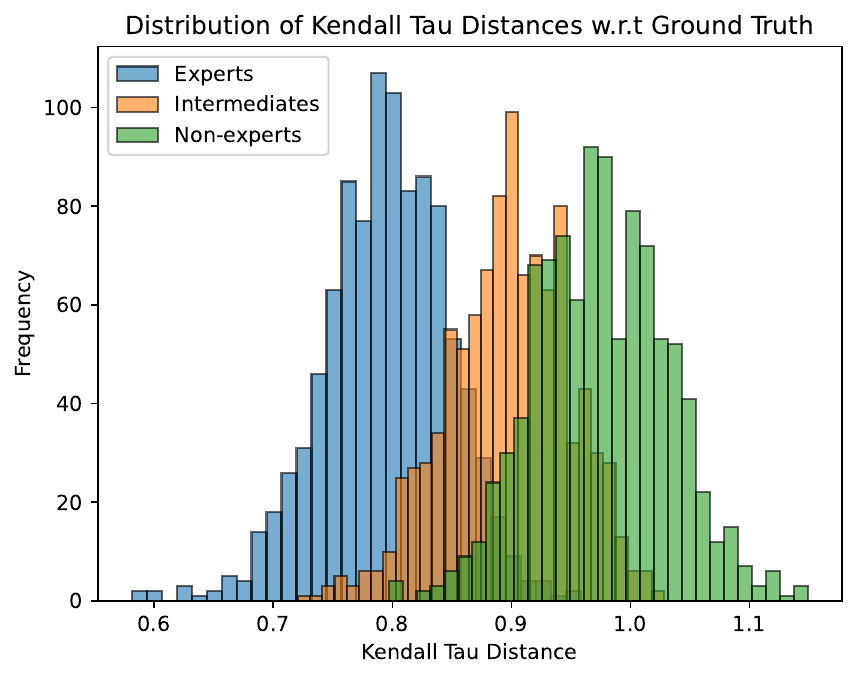}
        \subcaption{CMPL}
        \label{fig:partialspplackettluce_groups3}
    \end{minipage}
\caption{Distribution of complete rankings (predicted)  for each population group. The data contains partial preferences over subsets. The \texttt{CMPL} model provides  fine-grained inferences because it learns the weight of each position in the full ranking.}
\end{figure*}

Before sampling, the rankings provided by participants (both votes and predictions) are converted into indices, which correspond to the options being ranked. The strength parameters, which reflect the relative probability of ranking an alternative higher than the others within a group, are inferred separately for experts, intermediates, and non-experts. The model's priors for the group proportions and the strength parameters are defined as follows:

\begin{small}
\begin{align*}
p &\sim \text{Dirichlet}(1, 2, 3), \\
\theta_{E\text{-votes}} &\sim \text{Dirichlet}(\underbrace{3, 3, \dots, 3}_{m}), \quad \theta_{E\text{-predictions}} \sim \text{Dirichlet}(\underbrace{1, 1, \dots, 1}_{m}), \\
\theta_{I\text{-votes}} &\sim \text{Dirichlet}(\underbrace{2, 2, \dots, 2}_{m}), \quad \theta_{I\text{-predictions}} \sim \text{Dirichlet}(\underbrace{1, 1, \dots, 1}_{m}), \\
\theta_{NE\text{-votes}} &\sim \text{Dirichlet}(\underbrace{1, 1, \dots, 1}_{m}), \quad \theta_{NE\text{-predictions}} \sim \text{Dirichlet}(\underbrace{1, 1, \dots, 1}_{m})
\end{align*}
\end{small}
These priors reflect the assumption that experts are expected to have higher strengths, indicating that they consistently rank the correct alternatives higher. Intermediates have moderate strengths, and non-experts are assumed to have the lowest strengths, indicating a less accurate ranking behavior. 

In addition, we impose the model constraints described in \Cref{subsubsec:constraints}, ensuring that the strength parameters for each group follow the expected relationships (e.g., ensuring that expert strengths are higher and decrease in a structured manner across groups). The likelihood function is structured to account for the mixture model, where participants may belong to one of the three groups. The observed rankings (in the form of indices) are used to compute the log-likelihood based on the Plackett-Luce model, where each group's strength parameters determine the probability of a particular ranking. 

Similar to the \texttt{CMM} model, we run the NUTS algorithm with four chains, each consisting of 6000 iterations, with 2000 iterations reserved for warm-up. \Cref{fig:plackettluce_groups2} and \Cref{fig:plackettluce_groups3} show the distribution of strength parameters of Votes and Predictions for the first, third, and fifth positions in the ranking. We again observe the benefit of having $G=3$ where the intermediate group peaks between the experts and non-experts (\Cref{fig:plackettluce_groups3}, Position 1). Additionally, the recovered strength parameters also demonstrate the stochastic dominance property. Looking at position 1 in both \Cref{fig:plackettluce_groups2} and \Cref{fig:plackettluce_groups3}, the strength parameter of the expert peaks at a higher value than the non-experts and intermediates. For positions 3 and 5 the peaks of the experts' strength parameter shifts left and gradually merges with non-experts, in order to ensure that $\sum_i \theta_{g,i} = 1$.

\subsection{Predicting Complete Rankings from Partial Rankings using \texttt{CMM} and \texttt{CMPL}}

We predict the complete ranking of 36 alternatives from partial rankings, for each population group (experts, intermediates, and non-experts) using the \texttt{CMM} and \texttt{CMPL} models. The dataset containing $36$ alternatives is divided into $12$ subsets, each containing $5$ alternatives and we collect vote information over these subsets.

In both models, we use a hierarchical approach. We first fit each model to the subsets independently, learning the parameters for the alternatives within each subset. Since some alternatives appear in multiple subsets, this creates transitive relationships that help predict a global ranking across all 36 alternatives accurately. Once the parameters are inferred, we sample from the posterior distributions and input these samples into the respective \texttt{CMM} or \texttt{CMPL} model to generate the full ranking.

\textbf{CMM.} For each subset, we infer the group-specific posterior distribution of dispersion parameters for each population group (experts, intermediates, and non-experts). Using these inferred parameters, we generate rankings by inputting the values into \texttt{CMM} model. This allows us to compute a distribution of Kendall Tau distances by comparing the predicted subset-level rankings to the ground truth for each group. We then sample from the posterior of these group-specific distributions- both the dispersion parameters and Kendall Tau distances- and use these samples in the \texttt{CMM} model to generate full rankings for all 36 alternatives. To quantify uncertainty in these predicted rankings, we apply bootstrapping, which provides a range of plausible full rankings derived from the posterior samples.

\textbf{CMPL.} For each subset, we infer the posterior distribution of group-specific strength parameters for each alternative, providing a probabilistic estimate of each alternative’s rank. We use the \texttt{CMPL} model to iteratively select the alternative with the highest sampled strength parameter at each position, repeating the process for the remaining positions to generate a complete ranking. To quantify uncertainty, we apply bootstrapping, generating a full distribution of plausible complete rankings.

\Cref{fig:partialspmallows_groups3} and \Cref{fig:partialspplackettluce_groups3} show the distribution of Kendall Tau distance for each group (experts, intermediates, and non-experts) when the complete rankings are inferred from \texttt{CMM} and \texttt{CMPL} respectively. For the \texttt{CMPL} model,  \Cref{fig:partialspplackettluce_groups3}, the distributions reflect that experts are closest to the ground truth, followed by intermediates, and then non-experts. This distinction is less pronounced in \texttt{CMM} model, \Cref{fig:partialspmallows_groups3}. The \texttt{CMPL} model provides more fine-grained inferences because it learns the distribution over each position in the full ranking through the posterior estimates, allowing for more precise predictions of the rank order of alternatives. In contrast, the \texttt{CMM} model is less fine-grained, as it estimates how close the ranking is to the ground truth based on a single dispersion parameter, per population group, that represents the overall distance but lacks detailed information about specific positions within the ranking.

\section{Sample Complexity Results}

\begin{figure}[!h]
    \centering
    \begin{minipage}{0.48\textwidth}
        \centering
        \includegraphics[width=\textwidth]{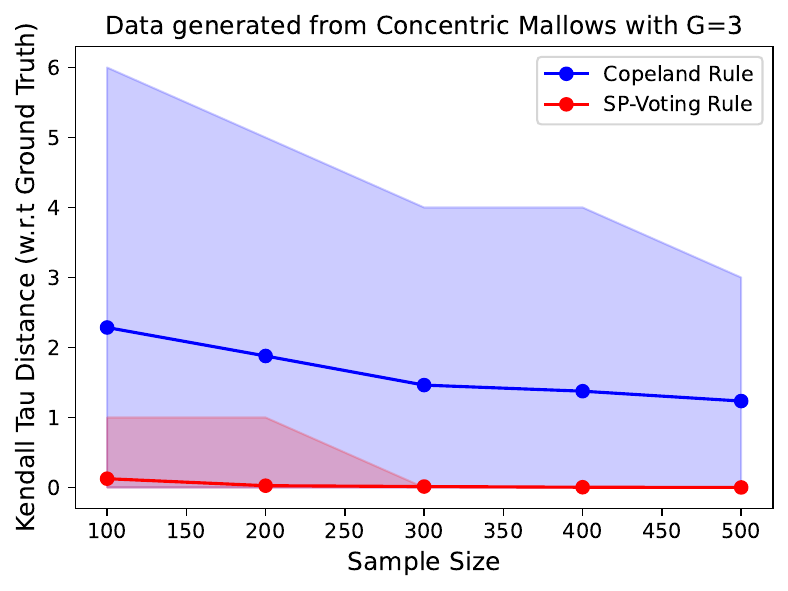}
        \label{fig:mallowsg3_samplecomplexity}
    \end{minipage}
    \begin{minipage}{0.48\textwidth}
        \centering
        \includegraphics[width=\textwidth]{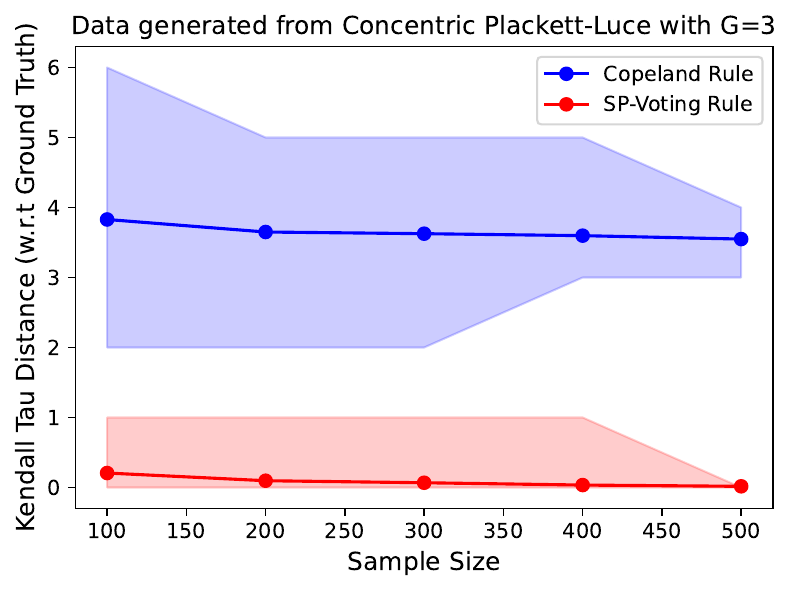}
\label{fig:plackettluceg3_samplecomplexity}
    \end{minipage}
    \caption{Comparison of sample complexity for data generated from \texttt{CMM} and \texttt{CMPL} models with G=3, and aggregated using Copeland and SP-Voting rule. }
    \label{fig:sample_complexity_comparisong3}
\end{figure}

\begin{figure}[!h]
    \centering
    \includegraphics[width=0.48\textwidth]{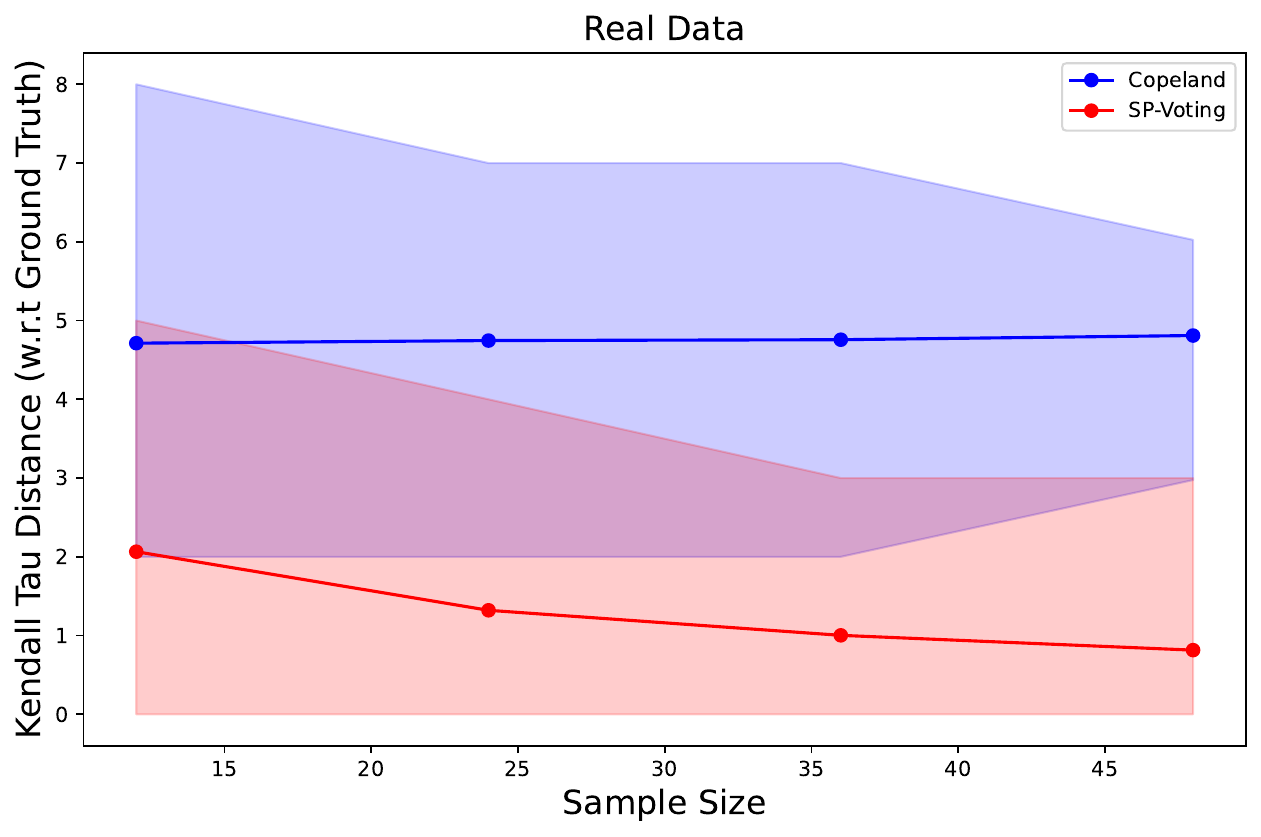}
    \caption{Comparison of sample complexity on real data when votes are aggregated using Copeland and SP-Voting. }
    \label{fig:sample_complexity_realdata}
\end{figure}

In this section, we analyze the impact of sample size on ground truth recovery by generating synthetic data using the \texttt{CMM} and \texttt{CMPL} models with $G=3$. We generate 500 samples with the proportion of experts in the population being $1\%$. \Cref{fig:sample_complexity_comparisong3} present a comparison of how sample size affects the performance of two aggregation methods: Copeland Rule \parencite{copeland1951reasonable} and SP-Voting. \Cref{fig:sample_complexity_realdata} shows the same comparison on real data. Refer to \Cref{fig:sample_complexity_comparisong2} in \Cref{appendix:results} for results with $G=2$.

From \Cref{fig:sample_complexity_comparisong3}, it is evident that SP-Voting outperforms the Copeland Rule in terms of accurately recovering the ground-truth ranking as the sample size increases. For both \texttt{CMM} and \texttt{CMPL} models, the Kendall Tau distance between the estimated and ground truth rankings consistently decreases with increasing sample sizes. However, the SP-Voting method shows a sharper decline compared to the Copeland Rule, indicating its superior performance in reaching the ground truth. The confidence intervals (shaded areas) for SP-Voting are consistently narrower compared to those for Copeland, implying higher stability and lower variability of SP-Voting across different sampling scenarios. \Cref{fig:sample_complexity_realdata} shows the analysis on a limited 48 samples of real data, where we can see a gradual decrease in the mean value and the confidence around it for SP-Voting as compared to Copeland, indicating that with more samples, ground-truth recovery can be achieved faster and with higher certainty using SP-Voting.

Overall, the comparison between the two models— \texttt{CMM} and \texttt{CMPL}—shows similar trends, with SP-Voting consistently outperforming the Copeland Rule across both models. Increasing sample size notably helps both methods, but SP-Voting achieves ground truth recovery with fewer samples and more consistency.  This indicates that the prediction information involved in SP-Voting helps correct the effect of non-expert votes and thus helps reach the ground truth faster. These findings reinforce the efficacy of SP-Voting over traditional aggregation rules like the Copeland Rule, in terms of both accuracy and reliability when aggregating rankings to recover the ground truth.

\section{Discussion and Future Work}
In this work, we have analyzed SP-voting under two concentric rank-order models (Mallows and Plackett-Luce) with an arbitrary number of groups. We observed that real-world datasets often have multiple groups of experts ($G \ge 3$) and SP-voting performs better in terms of sample complexity when compared to standard voting rules. There are many interesting directions for future work. First, \citet{prelec2017solution} have proposed the \emph{self-predicting} property for the general SP algorithms. Although this condition is not sufficient to derive finite sample complexity bounds, it would be interesting to see how it compares with the conditions we derived for various concentric rank-order models. Second, we have seen that moving from $G=2$ to $G=3$ groups gives a significantly better fit (and explanation) with respect to the real data but the improvement is marginal for larger values of $G$. Then a natural question is can we choose the number of groups $G$ in a a data-dependent way? Finally, in terms of sample complexity, we have analyzed SP-voting for recovering ground truth ranking over $m$ alternatives, and the bound grows with  $m!$. This can be reduced to $O(m^2)$ for the pairwise version of SP-voting considered in prior work~\cite{hosseini2021surprisingly} with additional assumptions. However, when the number of alternatives $m$ is large, we want the sample complexity to be independent of $m$. SP-voting with partial preferences~\cite{hosseini2024surprising} help in such contexts, and we leave a fine-grained analysis of the partial variants of SP (under various concentric rank-order models) as future work.
\subsubsection*{Acknowledgments}
Hadi Hosseini acknowledges support from NSF IIS grants \#2144413 and \#2107173.
\printbibliography


\newpage

\appendix
\clearpage
\onecolumn
\section{Missing Proofs}

\subsection{Proof of \Cref{lem:CMM-G-gen}}
\begin{proof}
    As mentioned in \Cref{lem:CMM-G-gen} in the main text, we partition the set $G$ into sets $G_1 = \set{1,2,\ldots,s}$ and $G_2 = \set{s+1,\ldots,G}$. Now that we have simplified the formulation into two partitions, we proceed with an approach inspired by the proof of Lemma 2 in \citet{hosseini2024surprising} and establish the following upper and lower bounds on prediction normalized vote for $G$ groups in \texttt{CMM} model.
    $$
    \frac{f(\sigma)}{\sum_{\tilde{\sigma}} \Prs(\sigma | \tilde{\sigma}) } \le \overline{V}(\sigma) \le \frac{f(\sigma)}{\min_{\tilde{\sigma}} \Prs(\sigma | \tilde{\sigma}) }
    $$
    We can express the probability $\Prs(\sigma^\star | \tilde{\sigma})$ as follows
    \begin{align*}
        \Prs(\sigma | \tilde{\sigma}) = \sum_{j\in G_1} p_j \cdot \frac{\phi_j^{d(\tilde{\sigma}, \sigma)} }{Z(\phi_j)} + \sum_{j\in G_2} p_j \cdot \frac{\phi_j^{d(\tilde{\sigma}, \sigma)} }{Z(\phi_j)} 
    \end{align*}
    This gives us the following lower bound on $\overline{V}(\sigma^\star)$.
    \begin{align*}
        \overline{V}(\sigma^\star) &= \frac{\sum_{j\in G_1} \frac{p_j }{Z(\phi_j)} + \sum_{j\in G_2}  \frac{p_j }{Z(\phi_j)}}{\sum_{\tilde{\sigma}} \left( \sum_{j\in G_1} p_j \cdot \frac{\phi_j^{d(\tilde{\sigma}, \sigma^\star)} }{Z(\phi_j)} + \sum_{j\in G_2} p_j \cdot \frac{\phi_j^{d(\tilde{\sigma}, \sigma^\star )} }{Z(\phi_j)} \right) }\\
        &\ge \frac{\frac{1}{Z(\phi_s)} \sum_{j \in G_1} p_j + \frac{1}{Z(\phi_G)} \sum_{j \in G_2} p_j}{\sum_{j \in G_1} p_j \sum_{\tilde{\sigma}} \frac{\phi_j^{d(\sigma^\star, \tilde{\sigma})}}{Z(\phi_j)} + \sum_{j \in G_2} p_j \sum_{\tilde{\sigma}} \frac{\phi_j^{d(\sigma^\star, \tilde{\sigma})}}{Z(\phi_j)}}\\
        &= \frac{\frac{\alpha}{Z(\phi_s)} + \frac{1-\alpha}{Z(\phi_G)}}{\sum_{j \in G_1} p_j + \sum_{j \in G_2} p_j}\\
        &= \frac{\alpha}{Z(\phi_s)} + \frac{1-\alpha}{Z(\phi_G)}
    \end{align*}
    We can also obtain the following upper bound on $\overline{V}(\tau)$.
    \begin{align*}
        \overline{V}(\tau) &\le \frac{\sum_{j\in G_1} p_j \cdot \frac{\phi_j }{Z(\phi_j)} + \sum_{j\in G_2} p_j \cdot \frac{\phi_j }{Z(\phi_j)}}{\sum_{\tilde{\sigma}} \left( \sum_{j\in G_1} p_j \cdot \frac{\phi_j^{d(\tilde{\sigma}, \tau)} }{Z(\phi_j)} + \sum_{j\in G_2} p_j \cdot \frac{\phi_j^{d(\tilde{\sigma}, \tau )} }{Z(\phi_j)} \right) }\\
        &\le \frac{\frac{\phi_s}{Z(\phi_1)} \sum_{j \in G_1} p_j + \frac{\phi_G}{Z(\phi_{s+1})} \sum_{j \in G_2} p_j }{  \sum_{j\in G_1} p_j \sum_{\tilde{\sigma}} \frac{\phi_j^{d(\tilde{\sigma}, \tau)} }{Z(\phi_j)} + \sum_{j\in G_2} p_j \cdot \sum_{\tilde{\sigma}}\frac{\phi_j^{d(\tilde{\sigma}, \tau )} }{Z(\phi_j)} }\\
        &= \frac{\phi_s}{Z(\phi_1)} \alpha + \frac{\phi_G}{Z(\phi_{s+1})} (1-\alpha) 
    \end{align*}
    Therefore, in order to ensure $\overline{V}(\sigma^\star) \ge 2 \overline{V}(\tau)$  we need the following condition.
    $$
    \frac{\alpha}{Z(\phi_s)} + \frac{1-\alpha}{Z(\phi_G)} \ge 2 \left( \frac{\phi_s}{Z(\phi_1)} \alpha + \frac{\phi_G}{Z(\phi_{s+1})} (1-\alpha) \right)
    $$
\end{proof}

\subsection{Proof of \Cref{lem:PL-2-separation} }
\begin{proof}
    The proof is a direct application of the proof of Lemma 2 in \citet{hosseini2024surprising} with the only difference being that the parameters under consideration are of \texttt{CMPL} instead of \texttt{CMM}. Here, we establish the following upper and lower bounds on prediction normalized vote for $G=2$ groups in \texttt{CMPL} model. 
    \begin{equation}
        \frac{f(\sigma)}{\sum_{\tilde{\sigma}} \Prs(\sigma \mid \tilde{\sigma})} \le \overline{V}(\sigma) \le  \frac{f(\sigma)}{\min_{\tilde{\sigma}} \Prs(\sigma \mid \tilde{\sigma})} 
    \end{equation}
    Suppose $\sigma^\star$ is the true ranking and consider any ranking $\tau$ with $d(\tau, \sigma^\star) \ge 1$. Without loss of generality, we can assume that $\sigma^\star = 1 \succ 2 \succ \ldots \succ m$. This also implies that $\theta_{g,1} \ge \theta_{g,2} \ge \ldots \ge \theta_{g,m}$ for any group $g$. 

    Under the assumption of Concentric mixture of Plackett-Luce model we have,
    \begin{align*}
        \Prs(\sigma^\star \mid \tilde{\sigma}) &= p \cdot \Prs(\sigma^\star \mid \theta_1, \tilde{\sigma}) + (1-p) \cdot \Prs(\sigma^\star \mid \theta_2, \tilde{\sigma})\\
        &= p \cdot \prod_{j=1}^m \frac{\theta_{1,\tilde{\sigma}^{-1}(\sigma^\star(j))}}{\sum_{i=j}^m  \theta_{1,\tilde{\sigma}^{-1}(\sigma^\star(i))}} + (1-p) \cdot  \prod_{j=1}^m \frac{\theta_{2,\tilde{\sigma}^{-1}(\sigma^\star(j))}}{\sum_{i=j}^m  \theta_{2,\tilde{\sigma}^{-1}(\sigma^\star(i))}} 
    \end{align*}
    When $\theta_1$ stochastically dominates $\theta_2$ we have $\Prs(\sigma^\star \mid \theta_1, \sigma^\star) \ge \Prs(\sigma^\star \mid \theta_2, \sigma^\star)$. Moreover, using the fact $p < (1-p)$ we obtain the following lower bound on $\overline{V}(\sigma^\star)$.
    \begin{align*}
        \overline{V}(\sigma^\star) \ge \frac{2p \cdot \prod_{j=1}^m \frac{\theta_{2,j}}{\sum_{i=j}^m  \theta_{2,i}}}{(1-p) \cdot \sum_{\tilde{\sigma}}  \prod_{j=1}^m \frac{\theta_{1,\tilde{\sigma}^{-1}(j)}}{\sum_{i=j}^m  \theta_{1,\tilde{\sigma}^{-1}(i)}} +  \prod_{j=1}^m \frac{\theta_{2,\tilde{\sigma}^{-1}(j)}}{\sum_{i=j}^m  \theta_{2,\tilde{\sigma}^{-1}(i)}}} = \frac{p}{1-p} \cdot \prod_{j=1}^m \frac{\theta_{2,j}}{\sum_{i=j}^m  \theta_{2,i}}
    \end{align*}
    The last equality uses \cref{lem:pl-identity}. We now provide an upper bound on $\overline{V}(\tau)$. 
    \begin{align*}
        \Prs(\tau \mid \sigma^\star) &= p \cdot \Prs(\tau \mid \theta_1, \sigma^\star) + (1-p) \cdot \Prs(\tau \mid \theta_2, \sigma^\star)\\
        &\le p \cdot \Prs(\sigma^\star \mid \theta_1, \sigma^\star) + (1-p) \cdot \Prs(\sigma^\star \mid \theta_2, \sigma^\star)\\
        &\le 2(1-p) \cdot \Prs(\sigma^\star \mid \theta_1, \sigma^\star)
    \end{align*}
    The first inequality follows because the elements of $\theta_1$ and $\theta_2$ are arranged in non-decreasing order. The second inequality follows because $\theta_1$ stochastically dominates $\theta_2$. On the other hand,
    \begin{align*}
        \min_{\tilde{\sigma}} \Prs(\tau \mid \tilde{\sigma}) &\ge p \left( \prod_{j=1}^m \frac{\theta_{1,m-j+1}}{\sum_{i=j}^m  \theta_{1,m-i+1}} + \prod_{j=1}^m \frac{\theta_{2,m-j+1}}{\sum_{i=j}^m  \theta_{2,m-i+1}}\right)\\
        &\ge 2p \cdot \prod_{j=1}^m \frac{\theta_{1,m-j+1}}{\sum_{i=j}^m  \theta_{1,m-i+1}}
    \end{align*}
    The last inequality follows since $\theta_1$ stochastically dominates $\theta_2$. 
   Now we have the following upper bound on $\overline{V}(\tau)$.
    \begin{align*}
        \overline{V}(\tau) \le \frac{1-p}{p} \cdot \frac{\prod_{j=1}^m \frac{\theta_{1,j}}{\sum_{i=j}^m  \theta_{1,i}}}{\prod_{j=1}^m \frac{\theta_{1,m-j+1}}{\sum_{i=j}^m  \theta_{1,m-i+1}}}
    \end{align*}
    Therefore, as long as 
    $$
    \left( \frac{p}{1-p}\right)^2 \ge 2 \cdot \left( \prod_{j=1}^m \frac{\theta_{2,j}}{\sum_{i=j}^m  \theta_{2,i}}\right) \left( \prod_{j=1}^m \frac{\theta_{1,j}}{\sum_{i=j}^m  \theta_{1,i}}\right)^{-1} \left( \prod_{j=1}^m \frac{\theta_{1,m-j+1}}{\sum_{i=j}^m  \theta_{1,m-i+1}}\right)
    $$
    we are guaranteed that $\overline{V}(\sigma^\star) \ge 2 \overline{V}(\tau)$.
\end{proof}

\begin{lemma}\label{lem:pl-identity}
    For any vector $u = (u_1,\ldots,u_m)$  we have,
    $$
   \sum_{\tilde{\sigma}} \prod_{j=1}^m \frac{ u_{\tilde{\sigma}^{-1}(j)}}{\sum_{i=j}^m  u_{\tilde{\sigma}^{-1}(i)}} = 1
    $$
\end{lemma}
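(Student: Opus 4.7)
\textbf{Proof proposal for \Cref{lem:pl-identity}.} The identity is precisely the statement that the Plackett--Luce distribution with strength vector $u$ is a well-defined probability distribution on the symmetric group, so the plan is to prove it by induction on $m$ using the sequential-sampling structure of that distribution. First, I would reparameterize by setting $\pi = \tilde{\sigma}^{-1}$; since inversion is a bijection on the symmetric group, summing over $\tilde{\sigma}$ is the same as summing over $\pi$, and the identity becomes
\begin{equation*}
\sum_{\pi} \prod_{j=1}^m \frac{u_{\pi(j)}}{\sum_{i=j}^m u_{\pi(i)}} = 1.
\end{equation*}

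The base case $m=1$ is immediate since the single permutation contributes $u_1/u_1 = 1$. For the inductive step, I would split the outer sum according to the value of $\pi(1)$ and pull the $j=1$ factor outside, which gives
\begin{equation*}
\sum_{\pi} \prod_{j=1}^m \frac{u_{\pi(j)}}{\sum_{i=j}^m u_{\pi(i)}}
= \sum_{k=1}^m \frac{u_k}{\sum_{i=1}^m u_i} \sum_{\substack{\pi \\ \pi(1)=k}} \prod_{j=2}^m \frac{u_{\pi(j)}}{\sum_{i=j}^m u_{\pi(i)}}.
\end{equation*}
The key observation is that, once $\pi(1)=k$ is fixed, the inner product is exactly the Plackett--Luce-type expression on the $m-1$ remaining indices $\{1,\dots,m\}\setminus\{k\}$ with strength vector $u_{-k}$ (the restriction of $u$). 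Reindexing positions by $j' = j-1$ puts this inner sum in the form of the lemma with $m-1$ elements, so by the inductive hypothesis the inner sum equals $1$.

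Substituting back, the outer sum collapses to $\sum_{k=1}^m u_k / \sum_{i=1}^m u_i = 1$, completing the induction. There is no serious obstacle here; the only care needed is the bookkeeping of the reindexing $\pi \mapsto (\pi(2),\dots,\pi(m))$ as a permutation of $\{1,\dots,m\}\setminus\{\pi(1)\}$ and the verification that the denominators $\sum_{i=j}^m u_{\pi(i)}$ for $j \ge 2$ depend only on the multiset $\{\pi(j),\dots,\pi(m)\}$, which is exactly what the inductive hypothesis requires.
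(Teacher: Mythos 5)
Your proposal is correct and follows essentially the same route as the paper: induction on $m$, conditioning on the first-ranked element, factoring out the $j=1$ term, and applying the inductive hypothesis to the Plackett--Luce expression on the remaining $m-1$ elements. The explicit reparameterization $\pi = \tilde{\sigma}^{-1}$ is a minor notational cleanup over the paper's version but does not change the argument.
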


\begin{proof}
    We prove this result by induction on $m$. For $m=1$, there is only one permutation and the base case holds. Suppose, the claim is true for $m$. Then we have,
    \begin{align*}
         &\sum_{\tilde{\sigma}} \prod_{j=1}^{m+1} \frac{ u_{\tilde{\sigma}^{-1}(j)}}{\sum_{i=j}^{m+1}  u_{\tilde{\sigma}^{-1}(i)}} = \sum_{a} \sum_{\tilde{\sigma}: \tilde{\sigma}[1] = a}\prod_{j=1}^{m+1} \frac{ u_{\tilde{\sigma}^{-1}(j)}}{\sum_{i=j}^{m+1}  u_{\tilde{\sigma}^{-1}(i)}}\\
         =& \sum_a \frac{u_a}{\sum_{j=1}^{m+1}u_j} \sum_{\tilde{\sigma}: \tilde{\sigma}\in \mathcal{S}}\prod_{j=1}^{m} \frac{ u_{\tilde{\sigma}^{-1}(j)}}{\sum_{i=j}^{m+1}  u_{\tilde{\sigma}^{-1}(i)}}\\
         &= \sum_a \frac{u_a}{\sum_{j=1}^{m+1}u_j}  = 1
    \end{align*}
\end{proof}

\subsection{Proof of \Cref{lem:PL-gen-separation}}
\begin{proof}
 As mentioned in \Cref{lem:PL-gen-separation} in the main text, we partition the set $G$ into sets $G_1 = \set{1,2,\ldots,s}$ and $G_2 = \set{s+1,\ldots,G}$. Now that we have simplified the formulation into two partitions, we proceed with an approach inspired by the proof of Lemma 2 in \citet{hosseini2024surprising} and establish the following upper and lower bounds on prediction normalized vote for $G$ groups in \texttt{CMPL} model.
    \begin{equation}
        \frac{f(\sigma)}{\sum_{\tilde{\sigma}} \Prs(\sigma \mid \tilde{\sigma})} \le \overline{V}(\sigma) \le  \frac{f(\sigma)}{\min_{\tilde{\sigma}} \Prs(\sigma \mid \tilde{\sigma})} 
    \end{equation}
    Suppose $\sigma^\star$ is the true ranking and consider any ranking $\tau$ with $d(\tau, \sigma^\star) \ge 1$. Without loss of generality, we can assume that $\sigma^\star = 1 \succ 2 \succ \ldots \succ m$. This also implies that $\theta_{g,1} \ge \theta_{g,2} \ge \ldots \ge \theta_{g,m}$ for any group $g$. 

    Under the assumption of Concentric mixture of Plackett-Luce model we have,
    \begin{align*}
        \Prs(\sigma^\star \mid \tilde{\sigma}) &= \sum_{\ell=1}^G p_\ell \cdot \Prs(\sigma^\star \mid \theta_\ell, \tilde{\sigma}) \\
        &= \sum_{\ell=1}^G p_\ell \cdot \prod_{j=1}^m \frac{\theta_{\ell,\tilde{\sigma}^{-1}(\sigma^\star(j))}}{\sum_{i=j}^m  \theta_{\ell,\tilde{\sigma}^{-1}(\sigma^\star(i))}} \\
        &= \sum_{\ell=1}^G p_\ell \cdot \prod_{j=1}^m \frac{\theta_{\ell,\tilde{\sigma}^{-1}(j)}}{\sum_{i=j}^m  \theta_{\ell,\tilde{\sigma}^{-1}(i)}}
    \end{align*}
    When $\theta$ stochastically dominates $\theta'$ we have $\Prs(\sigma^\star \mid \theta, \sigma^\star) \ge \Prs(\sigma^\star \mid \theta', \sigma^\star)$. This gives us the following lower bound on $\overline{V}(\sigma^\star)$.
    \begin{align*}
        \overline{V}(\sigma^\star) &\ge \frac{\sum_{\ell \in G_1} p_\ell \cdot \prod_{j=1}^m \frac{\theta_{s,j}}{\sum_{i=j}^m  \theta_{s,i}} + \sum_{\ell \in G_2} p_\ell \cdot \prod_{j=1}^m \frac{\theta_{G,j}}{\sum_{i=j}^m  \theta_{G,i}}}{\sum_\ell p_\ell \cdot \sum_{\tilde{\sigma}}  \prod_{j=1}^m \frac{\theta_{\ell,\tilde{\sigma}^{-1}(j)}}{\sum_{i=j}^m  \theta_{\ell,\tilde{\sigma}^{-1}(i)}}} \\
        &= \alpha \prod_{j=1}^m \frac{\theta_{s,j}}{\sum_{i=j}^m  \theta_{s,i}} + (1-\alpha) \prod_{j=1}^m \frac{\theta_{G,j}}{\sum_{i=j}^m  \theta_{G,i}}
    \end{align*}
    The last equality uses \cref{lem:pl-identity}. We now provide an upper bound on $\overline{V}(\tau)$. 
    \begin{align*}
        \Prs(\tau \mid \sigma^\star) &= \sum_{\ell=1}^{G} p_\ell \cdot \Prs(\tau \mid \theta_\ell, \sigma^\star) \\
        &= \sum_{\ell=1}^{s} p_\ell \cdot \Prs(\tau \mid \theta_\ell, \sigma^\star) + \sum_{\ell=s+1}^{G} p_\ell \cdot \Prs(\tau \mid \theta_\ell, \sigma^\star)\\
    \end{align*}
    Now using the stochastic dominance relation, we obtain the lower bound.
    \begin{align*}
        \min_{\tilde{\sigma}} \Prs(\tau \mid \tilde{\sigma}) &\ge \sum_{\ell=1}^s p_\ell  \prod_{j=1}^m \frac{\theta_{1,m-j+1}}{\sum_{i=j}^m  \theta_{1,m-i+1}} + \sum_{\ell=s+1}^G p_\ell \prod_{j=1}^m \frac{\theta_{s+1,m-j+1}}{\sum_{i=j}^m  \theta_{s+1,m-i+1}}\\
        &\ge \alpha  \cdot \prod_{j=1}^m \frac{\theta_{1,m-j+1}}{\sum_{i=j}^m  \theta_{1,m-i+1}} + (1-\alpha) \cdot \prod_{j=1}^m \frac{\theta_{s+1,m-j+1}}{\sum_{i=j}^m  \theta_{s+1,m-i+1}}
    \end{align*}

   Now we have the following upper bound on $\overline{V}(\tau)$.
    \begin{align*}
        \overline{V}(\tau) \le  \frac{\alpha \prod_{j=1}^m \frac{\theta_{1,j}}{\sum_{i=j}^m  \theta_{1,i}} + (1-\alpha) \prod_{j=1}^m \frac{\theta_{s+1,j}}{\sum_{i=j}^m  \theta_{s+1,i}}}{\alpha  \cdot \prod_{j=1}^m \frac{\theta_{1,m-j+1}}{\sum_{i=j}^m  \theta_{1,m-i+1}} + (1-\alpha) \cdot \prod_{j=1}^m \frac{\theta_{s+1,m-j+1}}{\sum_{i=j}^m  \theta_{s+1,m-i+1}}}
    \end{align*}
    Therefore, as long as 
    $$
    \alpha \prod_{j=1}^m \frac{\theta_{s,j}}{\sum_{i=j}^m  \theta_{s,i}} + (1-\alpha) \prod_{j=1}^m \frac{\theta_{G,j}}{\sum_{i=j}^m  \theta_{G,i}} \ge \frac{2\alpha \prod_{j=1}^m \frac{\theta_{1,j}}{\sum_{i=j}^m  \theta_{1,i}} + 2(1-\alpha) \prod_{j=1}^m \frac{\theta_{s+1,j}}{\sum_{i=j}^m  \theta_{s+1,i}}}{\alpha  \cdot \prod_{j=1}^m \frac{\theta_{1,m-j+1}}{\sum_{i=j}^m  \theta_{1,m-i+1}} + (1-\alpha) \cdot \prod_{j=1}^m \frac{\theta_{s+1,m-j+1}}{\sum_{i=j}^m  \theta_{s+1,m-i+1}}}
    $$
    we are guaranteed that $\overline{V}(\sigma^\star) \ge 2 \overline{V}(\tau)$.
\end{proof}

\section{Missing Results}
\label{appendix:results}
\begin{figure}[h]
    \centering
    \begin{minipage}{0.48\textwidth}
        \centering
        \includegraphics[width=\textwidth]{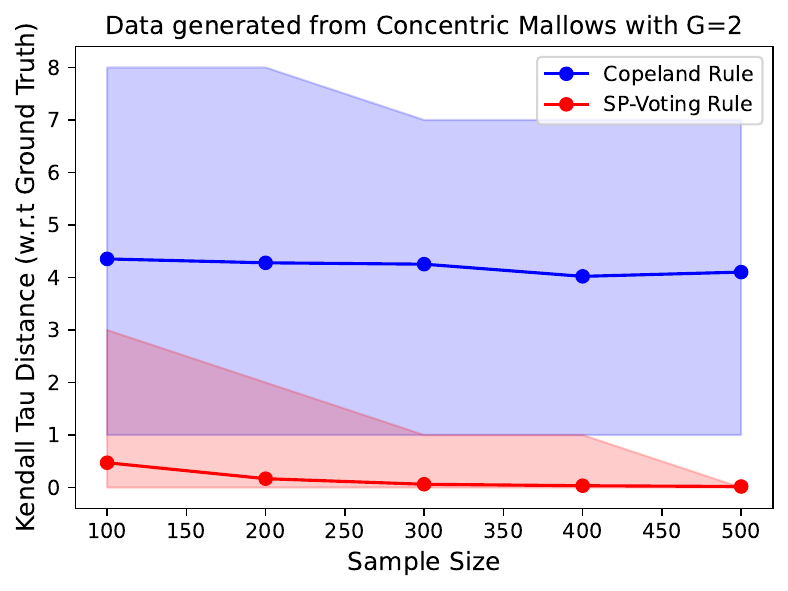}
        \label{fig:mallows_samplecomplexity}
    \end{minipage}
    \hfill
    \begin{minipage}{0.48\textwidth}
        \centering
        \includegraphics[width=\textwidth]{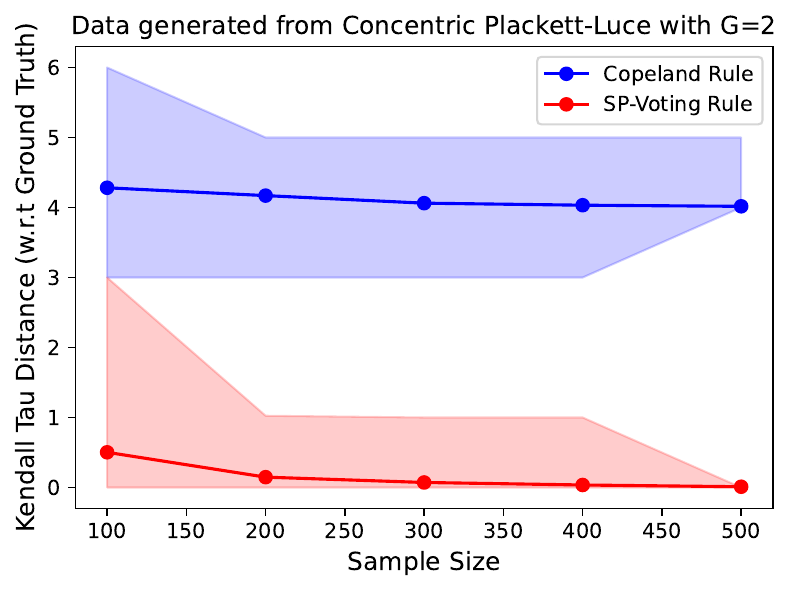}
\label{fig:plackettluce_samplecomplexity}
    \end{minipage}
    \caption{Comparison of sample complexity for data generated from \texttt{CMM} and \texttt{CMPL} models with G=2, and aggregated using Copeland and SP-Voting rule. }
    \label{fig:sample_complexity_comparisong2}
\end{figure}

\begin{figure}[h]
    \centering
    \includegraphics[width=0.48\textwidth]{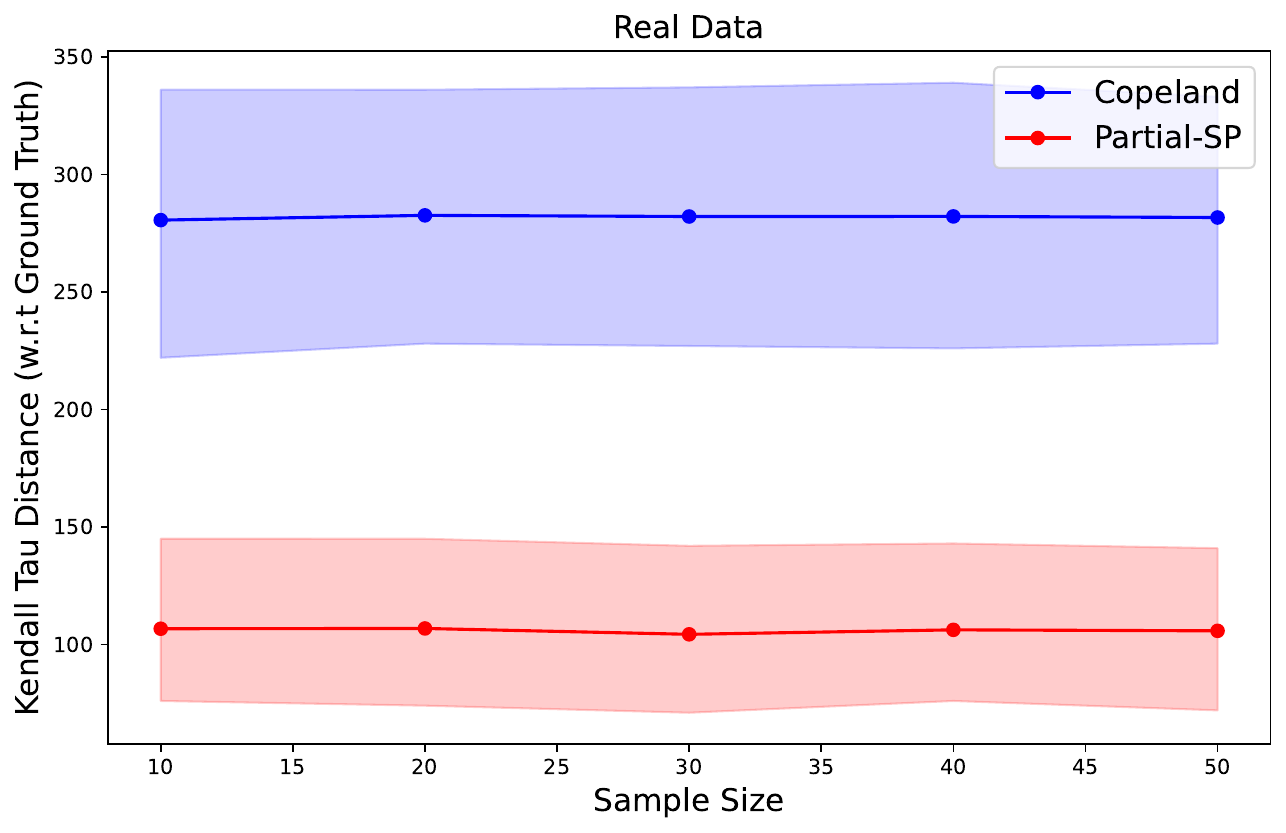}
    \caption{Comparison of sample complexity on real data when votes are aggregated using Copeland and Partial-SP. }
    \label{fig:sample_complexity_realdatapartialsp}
\end{figure}

    \end{document}